\newtheorem{theorem}{Theorem}
\newtheorem{lemma}{Lemma}
\newtheorem{corollary}{Corollary}
\newtheorem{definition}{Definition}
\newtheorem{observation}{Observation}
\newcommand{\bra}[1]{\mbox{$\left\langle #1 \right|$}}
\newcommand{\ket}[1]{\mbox{$\left| #1 \right\rangle$}}
\newcommand{\comments}[1]{}
\begin{document}
\preprint{APS/123-QED}
%\title{Efficient fidelity evaluation for permutation invariant states}
%\title{Decomposition of permutation-invariant multipartite observable}
\title{Decomposition of a symmetric multipartite observable}

\date{\today}% It is always \today, today,
             %  but any date may be explicitly specified
\author{You Zhou$^{\dag}$, Chenghao Guo$^{\dag}$, Xiongfeng Ma}
\email{xma@tsinghua.edu.cn}
\affiliation{Center for Quantum Information, Institute for Interdisciplinary Information Sciences, Tsinghua University, Beijing 100084, China}

\begin{abstract}
Production and verification of multipartite quantum state are an essential step in quantum information processing. In this work, we propose an efficient method to decompose symmetric multipartite observables, which are invariant under permutations between parties, with only $(N+1)(N+2)/2$ local measurement settings, where $N$ is the number of qubits. We apply the decomposition technique to evaluate the fidelity between an unknown prepared state and any target permutation invariant state. In addition, for some typical permutation invariant states, such as the Dicke state with a constant number of excitations, $m$, we derive a tight linear bound on the number of local measurement settings, $m(2m+3)N+1$. Meanwhile, for the $GHZ$ state, the $W$ state, and the Dicke state, we prove a linear lower bound, $\Theta(N)$. Hence, for these particular states, our decomposition technique is optimal.
%Our technique can also be applied to other tasks, such as entanglement detection.
%Moreover, we show that the scheme also benefits us in entanglement detection while coherent noise appearing, where traditional entanglement witness may fail. Specifically, we can realize this improvement by only modifying the post-processing of the measurement results.
\end{abstract}

%\pacs{}% PACS, the Physics and Astronomy
                             % Classification Scheme.
%\keywords{}%Use showkeys class option if keyword
                              %display desired
\maketitle
\section{Introduction}
Quantum states with genuine multipartite entanglement, such as the GHZ state \cite{greenberger1989going}, the Dicke state (including the $W$ state) \cite{Dicke1954Coherence,Dur2000three}, and the general graph (stabilizer) state \cite{Hein2004Multiparty,gottesman1997stabilizer}, are essential ingredients for many quantum information processing tasks, such as multipartite
quantum key distribution \cite{Chen2007Multi}, quantum secret sharing \cite{Hillery1999secret,Cleve1999Share}, quantum error correction \cite{gottesman1997stabilizer,nielsen2010quantum}, measurement-based quantum computing \cite{Raussendorf2001One}, and quantum metrology \cite{Wineland1994Squeezed,Giovannetti2004measure}. In practice, due to the noise caused by the uncontrolled interaction between the system and environment, the prepared state unavoidably deviates from the target one. Hence, it is necessary to quantify such deviation, which acts as a calibration for the experimental system and provides the basis of further information processing.

A straightforward method to benchmark the system is quantum state tomography \cite{Vogel1989Determination,Paris2004esimation}. In reality, due to the tensor product structure of the Hilbert space, the required resources scale exponentially with the number of system parties (say, qubits) in tomography. In the last decade, the qubit number under manipulation increases significantly in various experiment systems, such as those based on ion-trap \cite{Monz2011qubit}, superconducting \cite{Song2017Sconducting}, and linear optics \cite{Wang2016Photon}. Thus, it is impractical to directly conduct tomography for state-of-the-art multipartite quantum systems. Fortunately, the required resources can be dramatically reduced, if one possesses some preknowledge about the prepared state and takes advantage of symmetries. In this spirit, several efficient tomography methods were put forward for various types of quantum states, such as the low-rank state \cite{Gross2010Tomography,Flammia2012compressed}, the matrix product state \cite{Cramer2010Efficient,Baumgratz2013Scalable,Lanyon2017Efficient}, and the permutation invariant (PI) state \cite{Toth2010Permutation,Tobias2012Permutation}. The insight underlying this simplification is that one only needs the parameters of the ansatz states there, such as the tensor network state and the PI state, whose number only increases polynomially with the number of qubits $N$.

%\red{witnness may moved to other place} Similarly, different kind of entanglement witnesses are designed for specific states, such as GHZ state, W state and graph state. Since these witnesses are designed deliberately, it can lead a null result in some cases.

In practical quantum information tasks, instead of gaining all the information about the density matrix, one only needs to guarantee that the prepared state holds sufficiently high fidelity with the target state. If focusing on fidelity evaluation instead of a full tomography, one can further reduce the measurement efforts. Sometimes, one only needs to detect or witness entanglement for multipartite systems. Since quantum states with high symmetries are widely used in information processing, these tasks generally involve symmetric observables, which are invariant under permutations of parties. In addition, multipartite measurement is normally very challenging in practice. Instead, it is often broken down to local measurements \cite{TERHAL2002Detecting,Bourennane2004Experimental}. Take an $N$-qubit system for example, to measure the fidelity between a prepared state $\rho$ and the GHZ state, $\mathrm{Tr}(\rho\ket{GHZ}\bra{GHZ})$, one cannot measure it directly with $\ket{GHZ}\bra{GHZ}$. Instead, $\langle\ket{GHZ}\bra{GHZ}\rangle$ is broken down to a set of local measurements $\{A^{\otimes{N}}\}$, whose number determines the complexity of fidelity evaluation. Note that from a local measurement setting (LMS) $A^{\otimes{N}}$, not only the expectation value $\langle A^{\otimes{N}} \rangle$, but also the full statistics can be obtained. For instance, one can get the probability of any specific measurement result, say an $N$-bit string, from the Pauli-$Z$ measurement, $\sigma_Z^{\otimes{N}}$.

In this work, we focus on estimating the fidelity with PI states. The PI state set $\mathcal{S}_{PI}$ contains all the states which are invariant under any subsystem permutation,
\begin{equation}\label{Eq:defPI}
\begin{aligned}
\rho^{PI}=P(\pi)\rho^{PI}P(\pi), \; \forall \pi
\end{aligned}
\end{equation}
where $P(\pi)$ is permutation operation of the element $\pi$ in the symmetry group $S_N$.
It is worth mentioning another related state set, the symmetric state set $\mathcal{S}_S$, which contains all the pure states satisfying,
\begin{equation}\label{Eq:defSym}
\begin{aligned}
\ket{\psi_s}=P(\pi)\ket{\psi_s}, \; \forall \pi
\end{aligned}
\end{equation}
and their convex combination.
%GHZ state, W state, and Dicke state all belong to this set.
It is not hard to see that $\mathcal{S}_S\subset \mathcal{S}_{PI}$
\footnote{
For example, the Bell state $\ket{\Psi^-}=\frac{1}{\sqrt{2}}(\ket{01}-\ket{10})$ has eigenvalue $-1$ of the permutation operation between the two qubits, i.e., $P(2,1)\ket{\Psi^-}=-\ket{\Psi^-}$. Hence $\ket{\Psi^-} \notin  S_{S}$ but $\ket{\Psi^-}\in S_{PI}$. Symmetric states possess many interesting properties. For example, there is a dichotomy for the entanglement
depth of the pure symmetric state \cite{Ichikawa2008Exchange,Exchange2010Wei}, i.e., they are either product states or fully entangled.
}.

By decomposing a symmetric observable, we construct a set of $(N+1)(N+2)/2$ LMSs to evaluate the fidelity between a prepared state and any target PI state. To do this, we first give a general theorem which constructs a set of product-state basis for the symmetric subspace of the $N$-qudit Hilbert space. Then, based on this theorem, one can find a set of product operators as the basis for the symmetric subspace of the $N$-partite operator space, which contains all PI states. Thus, any PI state can be decomposed as the linear combination of these basis operators. As a result, by measuring all the basis operators with local measurements, one can finally evaluate the fidelity with respect to any target PI state by only post-processing the measurement results. By observing that several product operators can be measured with the same LMS, we finally reduce the number of LMSs to $(N+1)(N+2)/2$.

Moreover, based on this general decomposition, we can further reduce the number of LMSs for some typical PI states, such as the $GHZ$ state, the $W$ state, and the Dicke state with a constant number of excitations. Define measurement complexity as the minimal number of LMSs to decompose a state and we systemically study their measurement complexities. To give an upper bound of the measurement complexity, we show an explicit decomposition of the state by finding a smaller subspace of the symmetric subspace which contains the state and constructing the basis of it with fewer LMSs. To give a lower bound, we find some subspace where the projection of the target state shows some specific form, and prove that one needs enough LMSs to reestablish the same form. Combing the upper and the lower bounds, we finally show that their measurement complexities all scale as $\Theta(N)$.

The paper is organized as follows. In Sec.~\ref{Sec:Sym}, we construct a set of product-state bases for the symmetric subspace. In Sec.~\ref{sec:PI}, we propose a method based on the previous theorem, with $(N+1)(N+2)/2$ LMSs, to evaluate the fidelity between an unknown state and any target PI state. In Sec.~\ref{sec:typical}, we further reduce the number of LMSs in fidelity evaluation for some special PI states, $GHZ$, $W$, and Dicke. We finally conclude in Sec.~\ref{sec:final} with further discussions.

\section{Product-state basis for symmetric subspace}\label{Sec:Sym}
In this section, we construct a set of linearly independent vectors (states) in the product form, which can span the symmetric subspace. This construction will help us to find LMSs for decomposition of symmetric operators and fidelity evaluation of the PI state in the following sections.

First, let us briefly review the symmetric subspace of an $N$-qudit Hilbert space $\mathcal{H}_d^{\otimes{N}}$, denoted by $\mathrm{Sym}_N(\mathcal{H}_d)$. Given an element $\pi$ in the symmetric group $S_N$ with $N$ letters, the corresponding permutation operator defined on $\mathcal{H}_d^{\otimes{N}}$ is,
\begin{equation}\label{}
\begin{aligned}
P_d(\pi)=\sum_{i_1,\cdots ,i_N\in [d]}\ket{i_{\pi^{-1}(1)},\cdots ,i_{\pi^{-1}(N)}}\bra{i_1,\cdots,i_N},
\end{aligned}
\end{equation}
where $\{\ket{0},\ket{1},\cdots,\ket{d-1}\}$ is the local basis for each qudit and $[d]=\{0,1,\cdots,d-1\}$. The symmetric subspace
$ \mathrm{Sym}_N(\mathcal{H}_d)\subseteq \mathcal{H}_d^{\otimes{N}}$ contains all the pure states which are invariant under permutation,
\begin{equation}\label{}
\mathrm{Sym}_N(\mathcal{H}_d)=\{\ket{\Psi}\in \mathcal{H}_d^{\otimes{N}}: P_d(\pi)\ket{\Psi}=\ket{\Psi},\;\; \forall \pi\in S_N\}.
\end{equation}
In the qubit case with $d=2$, we denote the permutation operator $P(\pi)$ for simplicity.

It is known that the dimension of  $\mathrm{Sym}_N(\mathcal{H}_d)$ is given by \cite{Harrow2013symmetric}
\begin{equation}\label{Eq:symspacesize}
\begin{aligned}
D_{S}=\binom{N+d-1}{N}=\frac{(N+d-1)!}{N!(d-1)!},
\end{aligned}
\end{equation}
and there is a set of orthogonal (unnormalized) basis of $\mathrm{Sym}_N(\mathcal{H}_d)$,
\begin{equation}\label{Eq:Dbasisd}
\begin{aligned}
\left\{\ket{\Psi_{\vec{i}}}=\sum_{\pi}\ket{0}^{\otimes i_0}\ket{1}^{\otimes i_1}\cdots \ket{d-1}^{\otimes i_{d-1}} \bigg| i_k\in \mathbb{Z}^+,\;\; \sum_{k=0}^{d-1} i_k=N\right\},
\end{aligned}
\end{equation}
where $\mathbb{Z}^+$ denotes the nonnegative integer set and $\vec{i}=(i_0,i_1,\cdots,i_{d-1})$ is a $d$-dimensional vector. Here, $\sum_{\pi}$ represents the summation over all permutations of $N$ qudits that yield different expressions. %There are $N!$ terms in $\sum_{\pi}$.
We keep this notation throughout the paper.
%It is the generalization of Eq.~\eqref{Eq:SymBasis2} from qubit to qudit case.

Meanwhile, the symmetric subspace $\mathrm{Sym}_N(\mathcal{H}_d)$ can be spanned by the symmetric product states,
\begin{equation}\label{Eq:Sym:prod}
\mathrm{Sym}_N(\mathcal{H}_d)=\mathrm{span}\{\ket{\phi}^{\otimes N}:\ket{\phi}\in \mathcal{H}_d\}.
\end{equation}
%One can refer Ref.~\cite{Goodman1998Group,Christandl2006Group,Harrow2013symmetric} for detailed discussions.
For a finite $N$, it is not hard to see that the symmetric product states, $\ket{\phi}^{\otimes N}$, are linearly dependent. In the following, we construct a product-state basis for the symmetric subspace $\mathrm{Sym}_N(\mathcal{H}_d)$, by selecting $\binom{N+d-1}{N}$ linearly independent product states, as shown in Theorem \ref{Th:basis}.

Define a $d\times (N+1)$  matrix with complex elements $a_{k,j}$ satisfying
\begin{equation}\label{Eq:Adn}
\begin{aligned}
&a_{0,j}=1,\\
&a_{k,j}\neq a_{k,j'}, \forall 1\le k\le d-1,
\end{aligned}
\end{equation}
for all $0\leq j\neq j'\leq N$. That is, all the elements in the zeroth row are $1$; and for the other rows, the elements are different for different columns.

\begin{theorem}\label{Th:basis}
The following state set $\mathcal{B}$ contains $\binom{N+d-1}{N}$ linearly independent vectors which are (unnormalized) symmetric product states, and they can span the symmetric subspace $\mathrm{Sym}_N(\mathcal{H}_d)$,
\begin{equation}\label{}
\begin{aligned}
\mathcal{B}=\left\{\ket{\Phi_{\vec{j}}}=\Big(a_{0,j_0}\ket{0}+a_{1,j_1}\ket{1}+\cdots+a_{d-1,j_{d-1}}\ket{d-1}\Big)^{\otimes N}\bigg|j_k\in\mathbb{Z}^+,\; \sum_{k=0}^{d-1} j_k=N\right\},
\end{aligned}
\end{equation}
where the coefficients $a_{k,j_k}$ are selected from any matrix satisfying Eq.~\eqref{Eq:Adn}.
\end{theorem}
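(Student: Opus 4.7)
The plan is to reduce linear independence of $\mathcal{B}$ to a multivariate polynomial interpolation problem and dispatch the latter by induction. Because $|\mathcal{B}| = \binom{N+d-1}{N} = D_S = \dim\mathrm{Sym}_N(\mathcal{H}_d)$ and every $\ket{\Phi_{\vec j}}$ is of product form $\ket{\phi_{\vec j}}^{\otimes N}$---hence automatically a symmetric state by Eq.~\eqref{Eq:Sym:prod}---it is enough to prove that the $\ket{\Phi_{\vec j}}$ are linearly independent; the spanning claim then follows from the matching dimension count.

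First I would expand each $\ket{\Phi_{\vec j}}$ in the orthogonal basis $\{\ket{\Psi_{\vec i}}\}$ of Eq.~\eqref{Eq:Dbasisd}. Applying the multinomial theorem and using $a_{0,j_0}=1$ yields
\begin{equation*}
\ket{\Phi_{\vec j}} = \sum_{\vec i:\,\sum_k i_k = N}\Bigl(\prod_{k=1}^{d-1} a_{k, j_k}^{i_k}\Bigr)\,\ket{\Psi_{\vec i}}.
\end{equation*}
Linear independence is therefore equivalent to invertibility of the coefficient matrix $M_{\vec j,\vec i} = \prod_{k=1}^{d-1} a_{k, j_k}^{i_k}$, whose rows and columns are indexed by tuples $(j_1,\ldots,j_{d-1})$ and $(i_1,\ldots,i_{d-1})$ of nonnegative integers with component sum at most $N$. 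I would then recognize $M$ as the evaluation matrix of the monomial basis of $\mathbb{C}[y_1,\ldots,y_{d-1}]_{\leq N}$ at the node set $\mathcal{N} = \{(a_{1,j_1},\ldots,a_{d-1,j_{d-1}}) : \sum_k j_k \leq N\}$; invertibility of $M$ is thus equivalent to unisolvency of $\mathcal{N}$ for total-degree-$\leq N$ polynomial interpolation in $d-1$ variables.

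I would establish this unisolvency by strong induction on the pair $(d, N)$, ordered lexicographically. The base case $d=2$ is the classical nonsingularity of the Vandermonde matrix on the $N+1$ distinct nodes $a_{1,0},\ldots,a_{1,N}$. For the inductive step, take $p \in \mathbb{C}[y_1,\ldots,y_{d-1}]_{\leq N}$ vanishing on all of $\mathcal{N}$. Specializing $y_{d-1}=a_{d-1,0}$ produces a polynomial of degree $\leq N$ in $d-2$ variables that vanishes on the full staircase set for $(d-1,N)$; the induction hypothesis at $(d-1,N)$ forces this slice to be identically zero, so $(y_{d-1}-a_{d-1,0})$ divides $p$. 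Writing $p=(y_{d-1}-a_{d-1,0})\,q$ with $\deg q \leq N-1$, the quotient $q$ vanishes at every node of $\mathcal{N}$ with $j_{d-1}\geq 1$; reindexing $j_{d-1}'=j_{d-1}-1$ identifies these nodes with the staircase set for $(d,N-1)$ under the modified matrix $a'_{k,j}:=a_{k,\,j+\delta_{k,d-1}}$, which continues to satisfy Eq.~\eqref{Eq:Adn} with $N$ replaced by $N-1$. The induction hypothesis at $(d,N-1)$ then gives $q\equiv 0$, hence $p\equiv 0$.

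The main technical obstacle is careful bookkeeping in the inductive step: one has to verify both that the slice at $y_{d-1}=a_{d-1,0}$ fits a smaller-$d$ instance with the same $N$, and that dividing out the linear factor fits a smaller-$N$ instance at the same $d$ with a valid modified coefficient matrix still obeying Eq.~\eqref{Eq:Adn}. Once these two reductions are carefully set up, invertibility of $M$, and hence the theorem, follow at once.
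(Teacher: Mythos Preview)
Your proposal is correct and, at the level of mathematical content, follows essentially the same route as the paper: both reduce to showing the coefficient matrix $M_{\vec i,\vec j}=\prod_{k\ge 1}a_{k,j_k}^{i_k}$ is nonsingular, and both establish this via a two-step induction that first peels off one coordinate to hit a $(d-1,N)$ instance and then reduces the remaining block to a $(d,N-1)$ instance with a shifted coefficient matrix. The only difference is packaging---the paper carries out the inductive step by an explicit row transformation that exposes a block-triangular structure, whereas you phrase the same reduction as unisolvency of a staircase node set and use the factor theorem to split off $(y_{d-1}-a_{d-1,0})$; your formulation is arguably cleaner, but the two arguments are isomorphic.
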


In the following, we call the vectors in $\mathcal{B}$ a set of basis of $\mathrm{Sym}_N(\mathcal{H}_d)$, even though they might not be  orthogonal with each other.
The complete proof of Theorem \ref{Th:basis} is shown in Appendix \ref{prf:thm1}, which is based on induction. Here, we present the qubit case of Theorem \ref{Th:basis} in Corollary \ref{Co:symqubit} and provide a simple proof.

\begin{corollary}\label{Co:symqubit}
The following state set $\mathcal{B}$ contains $N+1$ linearly independent vectors which are (unnormalized) symmetric product states, and they can span the symmetric subspace $\mathrm{Sym}_N(\mathcal{H}_2)$,
\begin{equation}\label{eq:symmqubitset}
\begin{aligned}
\mathcal{B}=\left\{\ket{\Phi_j}=(\ket{0}+a_{j}\ket{1})^{\otimes N}|0\le j\le N\right\},
\end{aligned}
\end{equation}
where $a_i$ are complex numbers and $a_j\neq a_{j'}$ for $j\neq j'$.
\end{corollary}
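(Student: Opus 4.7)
The plan is to reduce the claim to a Vandermonde determinant calculation. Since $\dim \mathrm{Sym}_N(\mathcal{H}_2) = N+1$ by Eq.~\eqref{Eq:symspacesize} with $d=2$, and $|\mathcal{B}| = N+1$, it suffices to verify that the vectors in $\mathcal{B}$ are linearly independent; spanning follows automatically from the dimension count. Note that each $\ket{\Phi_j}$ is automatically in $\mathrm{Sym}_N(\mathcal{H}_2)$ as an $N$-fold product of identical one-qubit states, so only independence needs to be checked.

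First, I would expand each element of $\mathcal{B}$ in the orthogonal basis $\{\ket{\Psi_k}\}_{k=0}^N$ of Eq.~\eqref{Eq:Dbasisd}, where $\ket{\Psi_k}=\sum_\pi\ket{0}^{\otimes(N-k)}\ket{1}^{\otimes k}$. A direct application of the binomial theorem gives
\begin{equation*}
\ket{\Phi_j} \;=\; (\ket{0}+a_j\ket{1})^{\otimes N} \;=\; \sum_{k=0}^{N} a_j^{\,k}\, \ket{\Psi_k},
\end{equation*}
so the change-of-basis matrix $M$ with entries $M_{jk}=a_j^{\,k}$ is precisely the Vandermonde matrix in the nodes $a_0,\dots,a_N$.

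Next, I would invoke the classical Vandermonde identity $\det M = \prod_{0\le j<j'\le N}(a_{j'}-a_j)$. The hypothesis that the $a_j$'s are pairwise distinct forces every factor to be nonzero, so $M$ is invertible. This yields linear independence of the $\ket{\Phi_j}$'s, and combined with the dimension count closes the argument.

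The hard part is essentially absent in the qubit case: the key move is recognizing that the binomial expansion produces exactly a Vandermonde structure, after which linear independence is immediate. The genuine difficulty arises only for general $d$ in Theorem~\ref{Th:basis}, where an analogous ``multivariate Vandermonde'' phenomenon must be established by induction (handled in Appendix~\ref{prf:thm1}); for $d=2$ no such machinery is needed.
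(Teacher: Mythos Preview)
Your proof is correct and follows essentially the same approach as the paper: both reduce to linear independence via a dimension count, expand each $\ket{\Phi_j}$ in the Dicke basis $\{\ket{\Psi_k}\}$ to obtain coefficient vectors $(1,a_j,\dots,a_j^N)$, and conclude by the nonsingularity of the resulting Vandermonde matrix. Your additional remark about the $d>2$ case requiring the inductive machinery of Appendix~\ref{prf:thm1} is accurate and matches the paper's organization.
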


\begin{proof}
The state set $\mathcal{B}$ only contains symmetric product states in $\mathrm{Sym}_N(\mathcal{H}_2)$. From Eq.~\eqref{Eq:symspacesize}, the dimension of $\mathrm{Sym}_N(\mathbb{C}^2)$ is $N+1$, which equals the cardinality of $\mathcal{B}$. Therefore, we only need to prove that the states in $\mathcal{B}$ are linearly independent.

In this qubit case, the orthogonal basis, given in Eq.~\eqref{Eq:Dbasisd},
\begin{equation}\label{Eq:SymBasis2}
\begin{aligned}
\left\{\ket{\Psi_i}=\sum_{\pi}\ket{0}^{\otimes N-i}\ket{1}^{\otimes i},0\le i\le N\right\},
\end{aligned}
\end{equation}
contains all the Dicke states. If one expands $\ket{\Phi_{j}}$ of Eq.~\eqref{eq:symmqubitset} in this $\{\ket{\Psi_i}\}$ basis, the coefficients are $(1,a_j,a_j^2\cdots,a_j^N)^T$. The matrix formed by the coefficients $(\ket{\Phi_0},\ket{\Phi_1},\cdots, \ket{\Phi_N})$ is a Vandermonde matrix which is nonsingular. Consequently, $\ket{\Phi_k}$ are linearly independent and form a basis of $\mathrm{Sym}_N(\mathcal{H}_2)$.
\end{proof}

\section{Symmetric observable decomposition and fidelity evaluation}\label{sec:PI}
In this section, we propose a method to decompose a symmetric observable and apply it to evaluate the fidelity between a prepared state $\rho$ and any target PI state using $(N+1)(N+2)/2$ LMSs. Here, we only focus on the $N$-qubit scenario, but the method can be generalized to the $N$-qudit case. As shown in Eq.~\eqref{Eq:defPI}, a PI state is defined on the density matrix level. That is, $\rho^{PI}$ is invariant under any permutation operation among qubits.
%By contrast, symmetric state is defined on the state vector level (see Eq.~\eqref{Eq:defSym}). Thus, one has  $S_S\subsetneq S_{PI}$.
Due to this permutation invariant property, we only need to consider the case where the local operators in LMS are the same for all qubits \cite{Toth2010Permutation}, that is, in the form of $A^{\otimes{N}}$, where $A$ is a qubit Hermitian operator. Generally, one needs a set of LMSs $\{A^{\otimes{N}}_i\}$ to evaluate the fidelity. Also, the target state is normally pure. We have the following theorem.

\begin{theorem}\label{Th:PI:Fid}
For any $N$-qubit target PI state $\ket{\Psi^{PI}}$, the fidelity between a prepared state $\rho$ and $\ket{\Psi^{PI}}$,
\begin{equation}\label{Eq:fid}
F=\bra{\Psi^{PI}}\rho\ket{\Psi^{PI}},
\end{equation}
can be evaluated with $(N+1)(N+2)/2$ LMSs. %And therefore so could a corresponding witness $\mathcal{W}=\alpha \mathbb{I}-\ket{\Psi}\bra{\Psi}$.
\end{theorem}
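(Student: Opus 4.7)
The plan is to reduce the fidelity to a weighted sum of permutation-invariant Pauli expectations of $\rho$, and then to recover all those expectations by multivariate polynomial interpolation over the LMS directions. Since $\ket{\Psi^{PI}}\bra{\Psi^{PI}}$ is PI, expanding it in the symmetric Pauli basis gives $F=\sum_{\vec n}c_{\vec n}\langle T_{\vec n}\rangle_\rho$, where $T_{\vec n}$ is the symmetrization of $I^{\otimes n_I}\otimes X^{\otimes n_X}\otimes Y^{\otimes n_Y}\otimes Z^{\otimes n_Z}$ with $n_I+n_X+n_Y+n_Z=N$, and the $c_{\vec n}$ are fixed and computable from $\ket{\Psi^{PI}}$ alone.

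Each LMS I would use takes the form $(\vec r\cdot\vec\sigma)^{\otimes N}$ for a single-qubit direction $\vec r\in\mathbb{R}^3$. From its $2^N$-outcome statistics I can compute, purely by classical post-processing, the polynomial
\begin{equation*}
P_{\vec r}(s)=\mathrm{Tr}\!\left(\rho\,(sI+\vec r\cdot\vec\sigma)^{\otimes N}\right)=\sum_{b=0}^{N}\binom{N}{b}\,s^{N-b}\,M_b(\vec r),
\end{equation*}
so reading off the coefficient of each $s^{N-b}$ yields $M_b(\vec r):=\mathrm{Tr}(\rho\,T_b(\vec r))$ for every $b=0,\ldots,N$, with $T_b(\vec r)$ the symmetrization of $(\vec r\cdot\vec\sigma)^{\otimes b}\otimes I^{\otimes(N-b)}$. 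Thus a single LMS supplies $N+1$ numbers at once, and $M_b(\vec r)$ is a homogeneous polynomial of degree $b$ in the three components of $\vec r$ whose monomial coefficients are exactly the expectations $\langle T_{\vec n}\rangle_\rho$ with $n_I=N-b$.

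The bottleneck is the top polynomial $M_N$, whose monomial space has dimension $\binom{N+2}{2}=(N+1)(N+2)/2$; since $\binom{b+2}{2}\le\binom{N+2}{2}$ for every $b\le N$, any evaluations that determine $M_N$ simultaneously determine all lower-degree $M_b$. I would therefore select $(N+1)(N+2)/2$ directions $\vec r_i$ in sufficiently general position that the Vandermonde-type evaluation matrix on degree-$N$ polynomials in three variables is non-singular, invert it to read off every $\langle T_{\vec n}\rangle_\rho$, and then sum against the $c_{\vec n}$ to obtain $F$. The main technical step will be exhibiting such directions, and this is precisely what the $d=3$ instance of Theorem~\ref{Th:basis} provides: after dehomogenizing $\vec r=(1,r_1,r_2)$, the $\binom{N+2}{2}$ parameter points $(a_{1,j_1},a_{2,j_2})$ indexed by $j_1+j_2\le N$ form a valid interpolation set, because non-singularity of their evaluation matrix is equivalent, via dualizing, to the linear independence of the product-state basis of $\mathrm{Sym}_N(\mathcal{H}_3)$ established in Theorem~\ref{Th:basis}.
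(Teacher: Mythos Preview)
Your proposal is correct and follows essentially the same route as the paper: both arguments expand $\ket{\Psi^{PI}}\bra{\Psi^{PI}}$ in the symmetric Pauli basis, use the fact that a single LMS $(\vec r\cdot\vec\sigma)^{\otimes N}$ yields all the identity-padded expectations (the paper's Observation~\ref{Ob:reduce}, your extraction of the coefficients of $s^{N-b}$), and then invoke Theorem~\ref{Th:basis} to produce $\binom{N+2}{2}$ directions for which the relevant evaluation/coefficient matrix is nonsingular. The only cosmetic difference is that the paper applies Theorem~\ref{Th:basis} with $d=4$ to the operator space $\mathrm{Sym}_N(G_1)$ and then collapses the $\mathbb{I}$-coefficient via Observation~\ref{Ob:reduce}, whereas you apply Observation~\ref{Ob:reduce} first and then Theorem~\ref{Th:basis} with $d=3$ directly to the interpolation problem; the two are the same argument read in opposite orders.
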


%We show two propositions in the following that are helpful to the proof of Theorem.~\ref{Th:PI:Fid}, and leave the proof at the end this section.
Denote the projector formed by the PI state, $\ket{\Psi^{PI}}\bra{\Psi^{PI}}$, to be $\Psi^{PI}$. In order to measure the fidelity $F$ in Eq.~\eqref{Eq:fid}, one should decompose the projector $\Psi^{PI}$ into local measurements. In the following, we introduce the symmetric subspace of $N$-qubit Hermitian operators where $\Psi^{PI}$ is located. Then, by constructing a set of tensor-product bases of this symmetric subspace, we can accordingly decompose $\Psi^{PI}$.

\begin{proof}
Let us first define the symmetric subspace of $N$-qubit Hermitian operators. Denote the $N$-qubit Pauli group to be $G_N$, whose element, called the $N$-qubit Pauli operator, is a tensor product of single qubit Pauli operators and identity $G_1=\{\mathbb{I}, \sigma_X, \sigma_Y, \sigma_Z\}$. An $N$-qubit Hermitian operator $M$ can be written as the linear combination of the Pauli operators in $G_N$. Since the operators we consider here are Hermitian, their coefficients must be real. Thus, the operator space $F_{G_N}$ is isomorphic to $(\mathbb{R}^4)^{\otimes N}$, with $\mathbb{R}$ denoting the real domain. Then, the corresponding symmetric subspace, $\mathrm{Sym}_N(G_1)$, is defined as
\begin{equation}\label{}
\begin{aligned}
\mathrm{Sym}_N(G_1)=\left\{M\in F_{G_N}:P(\pi)MP(\pi)=M,\forall\pi\in S_N\right\}.
\end{aligned}
\end{equation}
By the definition in Eq.~\eqref{Eq:defPI}, any PI state, $\rho^{PI}\in \mathrm{Sym}_N(G_1)$.

Since $\mathrm{Sym}_N(G_1)$ is isomorphic to $\mathrm{Sym}_N(\mathbb{R}^4)$, the dimension of $\mathrm{Sym}_N(G_1)$ is $\binom{N+3}{3}$ according to Eq.~\eqref{Eq:symspacesize}. Meanwhile, the results shown in Sec.~\ref{Sec:Sym} can be directly applied to the operator space here.
\begin{enumerate}
\item
Similarly to the orthogonal basis shown in Eq.~\eqref{Eq:Dbasisd}, the following Hermitian operators form an orthogonal basis (in the sense of the Hilbert-Schmidt inner product) of $\mathrm{Sym}_N(G_1)$,
\begin{equation}\label{Eq:Mijk}
\begin{aligned}
M_{i,j,k}
=\sum_{\pi}\mathbb{I}^{\otimes i}\otimes \sigma_X^{\otimes j}\otimes\sigma_Y^{\otimes k}\otimes\sigma_Z^{\otimes (N-i-j-k)},
\end{aligned}
\end{equation}
where $\sigma_X^{\otimes j}$ denotes that there are totally $j$ qubits with $\sigma_X$ on them, similar for $\mathbb{I}^{\otimes i}, \sigma_Y^{\otimes k}, \sigma_Z^{\otimes (N-i-j-k)}$.

\item
Similarly to Eq.~\eqref{Eq:Sym:prod}, product operators can also span the symmetric subspace,
\begin{equation}\label{}
\mathrm{Sym}_N(G_1)=\mathrm{span}\{A^{\otimes N}:A=a\mathbb{I}+b\sigma_X+c\sigma_Y+d\sigma_Z\},
\end{equation}
where $a,b,c,d\in \mathbb{R}$.

\item
Similarly to the proof of Theorem \ref{Th:basis}, with the previous two steps, we only need to select $\binom{N+3}{3}$ linearly independent operators $\{A^{\otimes N}\}$ to act as the product-form basis of $\mathrm{Sym}_N(G_1)$. According to Theorem \ref{Th:basis}, we can construct a basis set for $\mathrm{Sym}_N(G_1)$ with product operators, where the local bases are $\{\mathbb{I}, \sigma_X, \sigma_Y, \sigma_Z\}$. To be specific,
\begin{equation}\label{Eq:Oset1}
\mathcal{B}_o=\left\{(a_{1,j_1}\mathbb{I}+a_{2,j_2}\sigma_X+a_{3,j_3}\sigma_Y+a_{0,j_0}\sigma_Z)^{\otimes N}\bigg|\sum_{k=0}^{3} j_k=N\right\},
\end{equation}
where the subscript ``o'' denotes an operator set, and the coefficients $a_{k,j_k}$ with $0\leq k \leq 3$ and $0\leq j_k \leq N$ are real numbers satisfying Eq.~\eqref{Eq:Adn}.
\end{enumerate}
%The following observation can further reduce the number of LMS.
After step 3, we further reduce the number $\binom{N+3}{3}$ down to $\binom{N+2}{2}$ due to the fact that some of the product operators can be measured by the same LMS.
%Now we give the formal proof of Theorem~\ref{Th:PI:Fid}.
Since $a_{0,j_0}=1$ for any $j_0$, the operator set in Eq.~\eqref{Eq:Oset1} can also be written as
\begin{equation}\label{Eq:basisOp1}
\begin{aligned}
\mathcal{B}_o=\left\{(a_i\mathbb{I}+b_j\sigma_X+c_k\sigma_Y+\sigma_Z)^{\otimes N}\bigg|0\le i,j,k\le N,\; i+j+k\le N\right\},
\end{aligned}
\end{equation}
where, for simplicity of notation, let $a_i=a_{1,i}$, $b_j=a_{2,j}$, and $c_k=a_{3,k}$.

On account of Observation \ref{Ob:reduce} below, product operators in $\mathcal{B}_o$ with different $a_i$ but the same $b_j$ and $c_k$ can be obtained via the same LMS $(b_j\sigma_X+c_k\sigma_Y+\sigma_Z)^{\otimes N}$. Therefore, the total number of LMSs required is the number of different parameter sets $(b_j,c_k)$, which equals to the number of solutions for $j+k\le N$, i.e., $\binom{N+2}{2}=(N+1)(N+2)/2$.

Consequently, one can utilize $(N+1)(N+2)/2$ LMSs to obtain the expectation values of all the basis operators in $\mathcal{B}_o$ of symmetric subspace $\mathrm{Sym}_N(G_1)$. Since any PI state $\Psi^{PI}\in \mathrm{Sym}_N(G_1)$, we can decompose any $\Psi^{PI}$ in this basis and finally obtain the fidelity in Eq.~\eqref{Eq:fid}.
\end{proof}

\begin{observation}\label{Ob:reduce}
The expectation value of the product operator $(a\mathbb{I}+b\sigma_X+c\sigma_Y+d\sigma_Z)^{\otimes N}$ can be obtained via the LMS $(b\sigma_X+c\sigma_Y+d\sigma_Z)^{\otimes N}$.
\end{observation}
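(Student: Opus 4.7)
The plan is to exploit the fact, emphasized earlier in the paper, that a single LMS $A^{\otimes N}$ provides not merely $\langle A^{\otimes N}\rangle$ but the full joint outcome distribution on the $N$ qubits. Hence I only need to show that the eigenvalues of $(a\mathbb{I}+b\sigma_X+c\sigma_Y+d\sigma_Z)^{\otimes N}$ on each joint eigenstate of the LMS $(b\sigma_X+c\sigma_Y+d\sigma_Z)^{\otimes N}$ can be read off from the classical outcome string, which lets me post-process the measurement record into the desired expectation value.

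First I would set $B=b\sigma_X+c\sigma_Y+d\sigma_Z$. Because $B$ is a real linear combination of Pauli operators, it is Hermitian with spectral decomposition $B=\lambda\ket{\psi_+}\bra{\psi_+}-\lambda\ket{\psi_-}\bra{\psi_-}$, where $\lambda=\sqrt{b^2+c^2+d^2}$ (the degenerate case $\lambda=0$, i.e.\ $B=0$, is trivial since then $(a\mathbb{I}+B)^{\otimes N}=a^N\mathbb{I}$ has a known constant expectation). Measuring the LMS $B^{\otimes N}$ on the prepared state $\rho$ yields outcome strings $\vec{s}=(s_1,\ldots,s_N)\in\{+,-\}^N$ with probabilities $p(\vec{s})=\bra{\psi_{\vec{s}}}\rho\ket{\psi_{\vec{s}}}$, where $\ket{\psi_{\vec{s}}}=\ket{\psi_{s_1}}\otimes\cdots\otimes\ket{\psi_{s_N}}$, and by hypothesis this full distribution is accessible.

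Next I observe that $a\mathbb{I}+B$ is diagonal in exactly the same single-qubit eigenbasis as $B$, with eigenvalues $a+\lambda$ on $\ket{\psi_+}$ and $a-\lambda$ on $\ket{\psi_-}$. Consequently $(a\mathbb{I}+B)^{\otimes N}$ is diagonal in the product basis $\{\ket{\psi_{\vec{s}}}\}$ with eigenvalues $\prod_{i=1}^{N}(a+s_i\lambda)$, where I identify $s_i\in\{+,-\}$ with $\pm 1$. Therefore
\begin{equation*}
\left\langle (a\mathbb{I}+b\sigma_X+c\sigma_Y+d\sigma_Z)^{\otimes N}\right\rangle_{\rho}
=\sum_{\vec{s}\in\{+,-\}^N}p(\vec{s})\prod_{i=1}^{N}(a+s_i\lambda),
\end{equation*}
which depends only on the outcome statistics of the LMS $(b\sigma_X+c\sigma_Y+d\sigma_Z)^{\otimes N}$.

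There is no real obstacle in this argument; the only subtlety is the conceptual point that an LMS provides the full joint outcome distribution and not just the expectation of the tensor-product observable, so that arbitrary functions of the per-qubit outcomes (in particular the product $\prod_i(a+s_i\lambda)$) can be evaluated from the same data. The mild edge case $B=0$ should be mentioned explicitly for completeness.
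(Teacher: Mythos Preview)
Your argument is correct. The paper's proof takes a different, purely algebraic route: it expands
\[
(a\mathbb{I}+b\sigma_X+c\sigma_Y+d\sigma_Z)^{\otimes N}=\sum_{i=0}^{N}a^{i}\sum_{\pi}\mathbb{I}^{\otimes i}\otimes(b\sigma_X+c\sigma_Y+d\sigma_Z)^{\otimes N-i},
\]
and then notes that each symmetrized term on the right is a marginal obtainable from the LMS $(b\sigma_X+c\sigma_Y+d\sigma_Z)^{\otimes N}$. Your spectral-decomposition argument is more elementary and makes transparent \emph{why} the full outcome statistics suffice (same eigenbasis, eigenvalues shifted by $a$), and it cleanly isolates the edge case $B=0$. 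The paper's expansion, on the other hand, has the advantage of producing exactly the operators $\sum_{\pi}\mathbb{I}^{\otimes j}A^{\otimes N-j}$ that appear in the definition of measurement complexity (Definition~\ref{def:MC}), so it dovetails directly with the subsequent counting arguments without any further translation.
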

\begin{proof}
\begin{equation}\label{Eq:reduce}
\begin{aligned}
&(a\mathbb{I}+b\sigma_X+c\sigma_Y+d\sigma_Z)^{\otimes N}\\=&\sum_{i=0}^Na^i\sum_{\pi}\mathbb{I}^{\otimes i}\otimes (b\sigma_X+c\sigma_Y+d\sigma_Z)^{\otimes N-i}.
\end{aligned}
\end{equation}
Note that each term in Eq.~\eqref{Eq:reduce} can be directly obtained from the LMS, $(b\sigma_X+c\sigma_Y+d\sigma_Z)^{\otimes N}$.
\end{proof}

Some remarks on and discussion of Theorem \ref{Th:PI:Fid} are listed as follows.

\begin{enumerate}
\item
We show how to efficiently decompose a general PI state in a set of product-form basis in Appendix \ref{Ap:decomp}, which is useful in practical implementation. And we remark that this decomposition is suitable for general symmetric operators, which may be useful for other related problems. In fact, on account of Theorem \ref{Th:basis}, it is possible to choose other sets of product-form bases. Thus, we also discuss how to choose a basis which is more robust to noise there.

\item
Theorem \ref{Th:PI:Fid} can be directly extended to the qudit case, by considering the generalized local Pauli basis for qudits.

%\item
%One can further consider LMS which contains different local operators for each qubit, but we claim that this kind of LMS cannot give more advantage here.

\item
Our method can evaluate the fidelity between an unknown prepared state and any PI state with the same $\binom{N+2}{2}$ LMSs, by only postprocessing the measurement results.

\item
The basis operators in $\mathcal{B}_o$ of Eq.~\eqref{Eq:Oset1} span the symmetric subspace $\mathrm{Sym}_N(G_1)$, which contains all the PI states, $\rho^{PI}\in \mathrm{Sym}_N(G_1)$. From the proof of Theorem \ref{Th:PI:Fid}, the expectation values of all these operators in $\mathcal{B}_o$ can be measured with $\binom{N+2}{2}$ LMSs. As a result, in addition to the fidelity evaluation in Eq.~\eqref{Eq:fid}, one can also gain more detailed information on a generic PI state $\rho^{PI}$ with these measurements. This is related to permutation invariant tomography \cite{Toth2010Permutation}, where one can extract the permutation invariant component of a state with $\binom{N+2}{2}$ LMSs. Note that our basis constructed in Eq.~\eqref{Eq:Oset1} is more explicit and different, compared to the one shown in Ref.~\cite{Toth2010Permutation}.
\end{enumerate}

We show in the following that our decomposition method can further help to reduce the number of LMSs for some special PI states.
To be specific, it is known that for the GHZ state and the W state, one only needs $N+1$ and $2N-1$ LMSs to decompose them respectively \cite{Guhne2007Toolbox}, compared with $\binom{N+2}{2}$ for a general PI state. In the following sections, we define a quantity called the \emph{measurement complexity} which quantifies the minimal number of LMSs to decompose a state, and systemically study the measurement complexities for the GHZ state, the W state, and the Dicke state.
\section{Complexity upper bound of the Dicke State}
In the previous section, we have obtained a method to decompose any PI state, and in the following we focus on reducing the number of LMSs for some specific PI states. Certain PI states are typical for quantum information processing and have been extensively studied, including the GHZ state, the W state, and the Dicke state. In this section and the next section, the number of LMSs required for these states are discussed.

%We give explicit constructions for the upper bounds and prove lower bounds that match the upper bounds asymptotically in the order of $N$. optimal

First, let us give the definition of the \emph{measurement complexity} of PI states, strictly speaking, symmetric-measurement complexity, since the LMSs utilized here are in the symmetric form. For simplicity, we use the term measurement complexity without confusion.
\begin{definition}\label{def:MC}
For an $N$-qubit PI state $\rho$, measurement complexity $C_S(\rho)$ is the minimal number of LMSs to decompose it,
\begin{equation}\label{Eq:MCDef}
\begin{aligned}
%C_S(\rho)=\min_{n_A\in \mathbb{N}}\left\{n_A\Bigg|\exists |\{A_i\}|=n_A,\rho=\sum_{i=1}^{n_A}\sum_{j=1}^N\alpha_{ij}\sum_{\pi}\mathbb{I}^{\otimes j}A_i^{\otimes N-j}\right\}, \\
&C_S(\rho)=\min n_A, \\
&s.t., \rho=\sum_{i=1}^{n_A}\sum_{j=1}^N\alpha_{ij}\sum_{\pi}\mathbb{I}^{\otimes j}A_i^{\otimes N-j},
\end{aligned}
\end{equation}
where $A_i=b_i\sigma_X+c_i\sigma_Y+d_i\sigma_Z$ and $\alpha_{ij}, b_i, c_i, d_i \in \mathbb{R}$.
\end{definition}

Here we allow $\sum_{\pi}\mathbb{I}^{\otimes j}A_i^{\otimes N-j}$ to appear in the summation in Eq.~\eqref{Eq:MCDef}, since its expectation value can be inferred from the LMS $A_i^{\otimes N}$. Note that we do not need to introduce $\mathbb{I}$ into $A_i$, like $(a_i\mathbb{I}+b_i\sigma_X+c_i\sigma_Y+d_i\sigma_Z)^{\otimes N}$, since any operator $\sum_{\pi}\mathbb{I}^{\otimes j}(a_i\mathbb{I}+A_i)^{\otimes N-j}$ can be written as a combination of $\sum_{\pi}\mathbb{I}^{\otimes j}A_i^{\otimes N-j}$.

A summary of the results is reported in Table \ref{table:MC}. It is noteworthy that the measurement complexities for these states all increase linearly with the number of qubits, $\Theta(N)$, whereas the measurement complexity of a general PI state increases quadratically with $N$, as stated in Theorem \ref{Th:PI:Fid}.
\begin{table*}[tbph]
\caption{Measurement complexity of some typical PI states. For these states, the measurement complexity is linear, $\Theta(N)$.}\label{table:MC}
\centering
\begin{tabular}{|c|c|c|}
\hline
State & Upper bound & Lower bound \\
\hline
W & $2N-1$ \cite{Guhne2007Toolbox} & $N-1$ \\
Dicke & $m(2m+3)N+1$ & $N-2m+1$ \\
GHZ & $N+1$ \cite{Guhne2007Toolbox} & $\lceil\frac{N+1}{2}\rceil$ \\
\hline
\end{tabular}
\end{table*}

In this section, we focus on finding the upper bound of the measurement complexity. To find a upper bound, one needs to give a specific decomposition of the state. Note that one can use $N+1$ and $2N-1$ LMSs to decompose the GHZ and the W state respectively, which are listed as the upper bounds in Table \ref{table:MC}.

In the following, we give an explicit decomposition of the Dicke state $\ket{D_{N,m}}$
of $N$ qubits with $m$ excitations \cite{Dicke1954Coherence},
\begin{equation}\label{}
\begin{aligned}
\ket{D_{N,m}}=\frac{1}{\sqrt{\binom{N}{m}}}\sum_{\pi}\ket{0}^{\otimes N-m}\ket{1}^{\otimes m}.
\end{aligned}
\end{equation}
using $m(2m+3)N+1$ LMSs. Note that as $m=1$, it is the W state \cite{Dur2000three},
\begin{equation}\label{}
\begin{aligned}
\ket{W_N}=\frac{1}{\sqrt{N}}\left(\ket{10\cdots 0 1}+\ket{01\cdots 0}+\cdots +\ket{00\cdots 1}\right).
\end{aligned}
\end{equation}

To do so, we characterize a subspace of the symmetric subspace that contains the state, then construct a basis set for that subspace with some LMSs.
Consequently, one obtains an upper bound of the measurement complexity of $\ket{D_{N,m}}$, which is summarized in the following theorem.
\begin{theorem}
The measurement complexity of the Dicke state $\ket{D_{N,m}}$ is upper bounded by
\begin{equation}\label{}
\begin{aligned}
C_S(D_{N,m})\le m(2m+3)N+1.
\end{aligned}
\end{equation}
\end{theorem}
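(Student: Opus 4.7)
The plan is to identify a subspace $V_m \subseteq \mathrm{Sym}_N(G_1)$ that contains the Dicke projector and to assemble a basis for $V_m$ from local measurement settings, following the blueprint of Theorem~\ref{Th:PI:Fid}. Expanding
\[
\ket{D_{N,m}}\bra{D_{N,m}}=\binom{N}{m}^{-1}\sum_{|x|=|y|=m}\ket{x}\bra{y}
\]
in the Pauli basis, each rank-one summand $\ket{x}\bra{y}$ with $|x|=|y|=m$ contributes only Pauli strings that carry $\sigma_X$ or $\sigma_Y$ on the $d_H(x,y)\le 2m$ positions where $x$ and $y$ differ; hence the projector lies in
\[
V_m := \mathrm{span}\bigl\{M_{i,j,k} : j+k \le 2m,\ i+j+k \le N\bigr\}.
\]
The $(j,k)$ pairs with $j+k\le 2m$ number $(m+1)(2m+1)$, so there are $m(2m+3)=(m+1)(2m+1)-1$ nontrivial ones. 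I would allocate the single LMS $\sigma_Z^{\otimes N}$ to the trivial pair, and, for each nontrivial pair $p=(j_0,k_0)$, a family of $N$ LMSs
\[
A_{p,l}^{\otimes N}=(\beta_p\sigma_X+\gamma_p\sigma_Y+d_l\sigma_Z)^{\otimes N},\qquad l=1,\dots,N,
\]
with $N$ distinct real $d_l$ and with the $m(2m+3)$ pairs $(\beta_p,\gamma_p)$ to be chosen below; this amounts to the claimed $1+m(2m+3)N$ LMSs.

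Next I would verify that these LMSs span $V_m$. By Observation~\ref{Ob:reduce}, each LMS $A^{\otimes N}$ gives access to $\sum_\pi \mathbb{I}^{\otimes i}\otimes A^{\otimes N-i}$ for every $i$. The LMS $\sigma_Z^{\otimes N}$ directly delivers $M_{i,0,0}$ for all $i$. For each nontrivial $p$ and each $i$, the $N$ operators $P_{p,l,i}=\sum_\pi\mathbb{I}^{\otimes i}\otimes A_{p,l}^{\otimes N-i}$ are values at $d_1,\dots,d_N$ of a polynomial of degree $N-i$ in $d$ whose coefficient of $d^{l''}$ equals
\[
C_{p,i,l''}=\sum_{j''+k''=N-i-l''}\beta_p^{j''}\gamma_p^{k''}\,M_{i,j'',k''}.
\]
Vandermonde interpolation in $d$ recovers all $N-i+1$ coefficients when $i\ge 1$, and all but one (the $d^N$ coefficient, which equals $M_{0,0,0}$) when $i=0$; the lone exception is already supplied by the $(0,0)$ LMS. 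Then, fixing $i$ and a weight $s=N-i-l''\le 2m$ and letting $p$ range over the $m(2m+3)$ nontrivial pairs yields that many linear combinations of the $s+1\le 2m+1$ basis operators $\{M_{i,j'',k''}:j''+k''=s\}$, with coefficient matrix $\bigl(\beta_p^{j''}\gamma_p^{k''}\bigr)$. Any full-rank $(s+1)\times(s+1)$ submatrix expresses each $M_{i,j'',k''}$ with $j''+k''\le 2m$ as a linear combination of the $C_{p,i,l''}$'s, placing $V_m$ in the span of the LMS-derived operators.

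The main obstacle is choosing $(\beta_p,\gamma_p)$ across the $m(2m+3)$ nontrivial pairs so that, for every $s\in\{1,\dots,2m\}$ simultaneously, the coefficient matrix $(\beta_p^{j''}\gamma_p^{k''})$ contains a full-rank $(s+1)\times(s+1)$ submatrix. A clean explicit choice is $(\beta_p,\gamma_p)=r_p(\cos\theta_p,\sin\theta_p)$ with distinct angles $\theta_p$ (avoiding zeros of $\cos$); the matrix then factorizes as $\mathrm{diag}(r_p^s\cos^s\theta_p)$ times a Vandermonde in $\tan\theta_p$, so any $s+1$ rows with distinct $\tan\theta_p$ give a nonzero determinant. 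A brief Zariski-genericity argument would work equally well. Once this parameter choice is fixed, the spanning argument above yields $C_S(D_{N,m})\le m(2m+3)N+1$.
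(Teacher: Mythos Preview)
Your proof is correct and reaches the same bound, but the route differs from the paper's in a few notable ways. The paper works in the alternative local basis $G_1'=\{\tfrac{\mathbb{I}-\sigma_Z}{2},\sigma_X,\sigma_Y,\tfrac{\mathbb{I}+\sigma_Z}{2}\}$, splits $D_{N,m}$ into the pieces $\Theta_t$ ($0\le t\le m$), and places each $\Theta_t$ inside a subspace $V_t\cong \mathrm{Sym}_{m+t}(\mathbb{R}^3)$; it then invokes Theorem~\ref{Th:basis} twice---once to produce a product-form basis $A_{j,k}^{(t)}$ of each $V_t$ (indexed by a grid $(b_j,c_k)$ with $j+k\le 2m$), and once more to realize each $A_{j,k}^{(t)}$ from $N+1$ genuine product operators $(\tan\theta\,T_{j,k}+\tfrac{\mathbb{I}+\sigma_Z}{2})^{\otimes N}$. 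You instead stay in the standard Pauli basis, identify a single containing subspace $V_m=\mathrm{span}\{M_{i,j,k}:j+k\le 2m\}$, and replace both appeals to Theorem~\ref{Th:basis} by explicit Vandermonde interpolation: first in the $\sigma_Z$-weight $d_l$ (to extract the homogeneous pieces $C_{p,i,l''}$), then in $\tan\theta_p$ (to separate the individual $M_{i,j'',k''}$). The paper's argument is more modular and makes the role of Theorem~\ref{Th:basis} transparent; your argument is more elementary and self-contained, needing only the classical Vandermonde determinant rather than the inductive Appendix~\ref{prf:thm1}. Both arrive at exactly $m(2m+3)$ nontrivial $(\sigma_X,\sigma_Y)$-directions, each paired with $N$ values of the $\sigma_Z$-weight, plus the single $\sigma_Z^{\otimes N}$ setting, so the counting matches. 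One small point worth making explicit in your write-up: for $i=0$ you are not merely ``missing'' the $d^N$ coefficient---with only $N$ nodes the degree-$N$ interpolation is underdetermined---so you should state that you first subtract $d_l^N M_{0,0,0}$ (supplied by $\sigma_Z^{\otimes N}$) and then interpolate the resulting degree-$(N-1)$ polynomial.
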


\begin{proof}
The density matrix of $\ket{D_{N,m}}$ can be explicitly written as
\begin{equation}\label{Eq:DecomDicke}
\begin{aligned}
D_{N,m}&= \frac{1}{\binom{N}{m}}\sum_{t=0}^{m}\Theta_t, \\
\Theta_t &\equiv \sum_{\pi}(\ket{1}\bra{1})^{\otimes m-t}\otimes (\ket{0}\bra{1})^{\otimes t}\otimes(\ket{1}\bra{0})^{\otimes t}\otimes (\ket{0}\bra{0})^{\otimes N-m-t}, \\
\end{aligned}
\end{equation}
where we denote each term in the summation $\Theta_t$, and $\Theta_0$ is the diagonal term. It is clear that $\Theta_0$ can be obtained by the $Z$-basis measurement $\sigma_Z^{\otimes N}$. Thus in the following we focus on $\Theta_t$ with $1\leq t \leq m$, which are the off-diagonal terms.

With the relations $\ket{1}\bra{1}=(\mathbb{I}-\sigma_Z)/2$, $\ket{0}\bra{0}=(\mathbb{I}+\sigma_Z)/2$, $\ket{0}\bra{1}=(\sigma_X+i\sigma_Y)/2$ and $\ket{1}\bra{0}=(\sigma_X-i\sigma_Y)/2$, $\Theta_t$ can be written in the form
\begin{equation}\label{Eq:Theta}
\begin{aligned}
\Theta_t=\sum_{\pi}(\frac{\mathbb{I}-\sigma_Z}{2})^{\otimes m-t}\otimes \chi_{t}\otimes  (\frac{\mathbb{I}+\sigma_Z}{2})^{\otimes N-m-t},
\end{aligned}
\end{equation}
where $\chi_{t}$ is given by
\begin{equation}\label{Eq:DickeXY}
\begin{aligned}
\chi_t=&\sum_{\pi}(\ket{0}\bra{1})^{\otimes t}\otimes(\ket{1}\bra{0})^{\otimes t}\\
=&\sum_{\pi}(\frac{\sigma_X+i\sigma_Y}{2})^{\otimes t}\otimes(\frac{\sigma_X-i\sigma_Y}{2})^{\otimes t}\\
=&\sum_{l=0}^{2t}\sum_\pi \alpha_{t,l}\sigma_X^{\otimes l}\sigma_Y^{\otimes (2t-l)},
\end{aligned}
\end{equation}
where $\alpha_{t,l}$ are the corresponding coefficients, whose values do not affect the following analysis.

%It is clear $\chi_{p}$ in Eq.~\eqref{Eq:Theta} is a symmetric operator.
Based on  the single qubit operators appearing in $\Theta_t$, we utilize a new orthogonal one-qubit basis  $G_1'=\{\frac{\mathbb{I}-\sigma_Z}{2},\sigma_X,\sigma_Y,\frac{\mathbb{I}+\sigma_Z}{2}\}$ to act as the local basis, and clearly one has $\mathrm{Sym}_N(G_1')=\mathrm{Sym}_N(G_1)$. Similar as Eq.~\eqref{Eq:Mijk}, the corresponding new orthogonal basis of the symmetric subspace is
  \begin{equation}\label{}
\begin{aligned}
M'_{i,j,k}=\sum_{\pi}(\frac{\mathbb{I}-\sigma_Z}{2})^{\otimes i}\otimes\sigma_X^{\otimes j}\otimes\sigma_Y^{\otimes k}\otimes(\frac{\mathbb{I}+\sigma_Z}{2})^{\otimes (N-i-j-k)}.
\end{aligned}
\end{equation}

As a result, $\Theta_t$ lies in the subspace
\begin{equation}\label{}
\begin{aligned}
V_t=\mathrm{span}\{M'_{i,j,k}|i+j+k=m+t\}.
\end{aligned}
\end{equation}
In fact,  $V_t$ is isomorphic to $\mathrm{Sym}_{(m+t)}(\mathbb{R}^3) $, in the sense that
\begin{equation}\label{}
\begin{aligned}
\sum_{\pi}\left(a(\frac{\mathbb{I}-\sigma_Z}{2})+b\sigma_X+c\sigma_Y\right)^{\otimes (m+t)}\otimes (\frac{\mathbb{I}+\sigma_Z}{2})^{\otimes (N-m-t)}=\sum_{i+j+k=m+t}a^ib^jc^kM'_{i,j,k}.
\end{aligned}
\end{equation}

Thus, according to Theorem \ref{Th:basis}, for any two real number sets  $\{b_0,b_1\cdots, b_{m+t}\}$ and $\{c_0, c_1,\cdots,c_{m+t}\}$,
\begin{equation}\label{Eq:BasisV1}
\begin{aligned}
\Big{\{}A_{j,k}^{(t)} =\sum_{\pi}(\frac{\mathbb{I}-\sigma_Z}{2}+b_j\sigma_X+c_k\sigma_Y)^{\otimes (m+t)}\otimes(\frac{\mathbb{I}+\sigma_Z}{2})^{\otimes (N-m-t)} \Big{|}j+k\le m+t\Big{\}}
\end{aligned}
\end{equation}
is a basis set of $V_t$.

Consequently, by constructing $A_{j,k}^{(t)}$ for all $j,k$, each $\Theta_t$ in the $V_t$ subspace can be decomposed. Then after decomposing all $\Theta_t$, the decomposition of $D_{N,m}$ can be obtained. In the following, we show how to construct $A_{j,k}^{(t)}$ with LMSs.

Since $1\leq t \leq m$, the parameter $b_j$, $c_k$ which determine the basis operator $A_{j,k}^{(t)}$ of each subspace $V_t$ in Eq.~\eqref{Eq:BasisV1} are extended to
$\{b_0,b_1\cdots, b_{2m}\}$ and $\{c_0, c_1,\cdots,c_{2m}\}$. That is, we construct $A_{j,k}^{(t)}$ of different $t$ by using the same sets of $b_j$ and $c_k$, which can help us save the number of LMSs.

Deonote $T_{j,k}=\frac{\mathbb{I}-\sigma_Z}{2}+b_j\sigma_X+c_k\sigma_Y$ for simplicity, and the basis operator $A_{j,k}^{(t)}$ of $V_t$ space in Eq.~\eqref{Eq:BasisV1} shows that
\begin{equation}\label{}
\begin{aligned}
A_{j,k}^{(t)}=\sum_{\pi}T_{j,k}^{\otimes (m+t)}(\frac{\mathbb{I}+\sigma_Z}{2})^{\otimes (N-m-t)}.
\end{aligned}
\end{equation}
$A_{j,k}^{(t)}$ is a symmetric operator, generated by the single-qubit operators $T_{j,k}$ and $(\mathbb{I}+\sigma_Z)/2$. Thus, according to Theorem \ref{Th:basis}, for specific $j,k$, one can construct $A_{j,k}^{(t)}$ by the following $N+1$ product basis for any $t$,
\begin{equation}\label{}
\begin{aligned}
%\left\{\left(\sin \theta_kT_{i,j}+\cos \theta_k(\frac{\mathbb{I}+\sigma_Z}{2})\right)^{\otimes N}\right\},
\left\{\left(\tan \theta_kT_{i,j}+\frac{\mathbb{I}+\sigma_Z}{2}\right)^{\otimes N}\right\},
\end{aligned}
\end{equation}
where $0\le k\le N$, $0\leq \theta_k < \pi$, and $\theta_k\not=\theta_{k'}$ with $k\neq k'$. For example, $\theta_k=\frac{k\pi}{N+1}$. As a result, after constructing $A_{j,k}^{(t)}$, we can decompose $\Theta_t$, as well as $D_{N,m}$.

Finally, let us count the total number of LMSs. For each $T_{j,k}$, we need $N+1$ LMSs. And there are $\binom{2m+2}{2}=(m+1)(2m+1)$ different $T_{j,k}$'s. Thus there is a total of $(m+1)(2m+1)(N+1)$ LMSs. In fact, one can reduce the number of the LMSs with more careful analysis.  There is one setting,  $(\frac{\mathbb{I}+\sigma_Z}{2})^{\otimes N}$, with $\theta_k=0$ shared by all $T_{j,k}$, which is equivalent to the $\sigma_Z^{\otimes N}$ setting. In addition, if we choose $b_0=c_0=0$, $T_{0,0}$ only needs the same setting $\sigma_Z^{\otimes N}$. As a result, the final number of LMSs is,
\begin{equation}\label{}
\begin{aligned}
((m+1)(2m+1)-1)N+1=m(2m+3)N+1.
\end{aligned}
\end{equation}
\end{proof}

Some remarks are as follows. First, our construction is general and suitable for any Dicke state $\ket{D_{N,m}}$. We expect that the number of LMSs could be reduced further with more elaborate analysis. Specifically, one may find a smaller subspace compared with $V_t$ that contains $\Theta_t$. Second, as $m=1$, our upper bound is $5N+1$, which is larger than the previous result $2N-1$ for the $W$ state. Thus, it is also expected that for certain Dicke state, such as $\ket{D_{N,2}}$, the number of LMSs can also be reduced. And we leave them for further work.
\section{Lower bound on the Measurement Complexity}\label{sec:typical}
In this section, we bound the measurement complexity of the GHZ state, the W state and the Dicke state from below. The corresponding results are listed in the second column of Table \ref{table:MC}.

Before formally discussing the lower bound of the measurement complexity for a specific state, here we explain the high level idea of how to bound it from below.
On the one hand, we find a subspace in which the projection of the state has a certain form; on the other hand, on account of the product form of the LMS, we show that the projection result can only be reconstructed with a high enough number of LMSs.

First, let us study the measurement complexity of the $N$-qubit GHZ state \cite{greenberger1989going},
\begin{equation}\label{}
\begin{aligned}
\ket{GHZ_N}=\frac1{\sqrt{2}}(\ket{0}^{\otimes N}+ \ket{1}^{\otimes N}).
\end{aligned}
\end{equation}
In Ref.~\cite{Guhne2007Toolbox}, the authors decompose $\ket{GHZ_N}$ into $N+1$ LMSs.  In the following theorem, we provide a lower bound for the measurement complexity of the GHZ state which also grows linearly. It means that one should make use of $\Theta(N)$ LMSs to evaluate the fidelity with the GHZ state.
\begin{theorem}\label{Th:lowGHZ}
The measurement complexity of the $N$-qubit GHZ state $\ket{GHZ_N}$ is lower bounded by
\begin{equation}\label{}
C_S(\ket{GHZ_N})\ge \lceil\frac{N+1}{2}\rceil.
\end{equation}
\end{theorem}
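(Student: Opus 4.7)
The plan is to exhibit a low-dimensional subspace onto which the GHZ projection has a rigid alternating-sign pattern that no small collection of LMS projections can reproduce. First, let $V\subset\mathrm{Sym}_N(G_1)$ be spanned by the $N+1$ symmetric Pauli strings $M_{0,j,N-j}=\sum_\pi \sigma_X^{\otimes j}\otimes \sigma_Y^{\otimes N-j}$ containing only $\sigma_X$ and $\sigma_Y$ (no $\mathbb{I}$ and no $\sigma_Z$). I would project both sides of any hypothetical decomposition of $\ket{GHZ_N}\bra{GHZ_N}$ onto $V$ with respect to the Hilbert--Schmidt inner product.

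On the state side, the diagonal part $(\ket{0}\bra{0})^{\otimes N}+(\ket{1}\bra{1})^{\otimes N}$ involves only $\mathbb{I}$ and $\sigma_Z$ and hence vanishes in $V$. Using $\ket{0}\bra{1}=(\sigma_X+i\sigma_Y)/2$ and expanding $(\sigma_X\pm i\sigma_Y)^{\otimes N}$, the off-diagonal part collapses to a real combination proportional to $\sum_{k\text{ even}}(-1)^{k/2}M_{0,N-k,k}$. On the LMS side, for $A_i=b_i\sigma_X+c_i\sigma_Y+d_i\sigma_Z$ the multinomial expansion of $A_i^{\otimes N}$ yields $P_V A_i^{\otimes N}=\sum_k b_i^{N-k}c_i^{k}M_{0,N-k,k}$, while the auxiliary operators $\sum_\pi\mathbb{I}^{\otimes j}A_i^{\otimes N-j}$ with $j\ge 1$ carry an identity factor and therefore project to zero. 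Thus in the basis $\{M_{0,N-k,k}\}_{k=0}^N$ the decomposition reduces to a real vector equation $\vec{\gamma}=\sum_{i=1}^{n_A}\alpha_i\vec{r}_i$ in $\mathbb{R}^{N+1}$, with $\gamma_k\propto(-1)^{k/2}$ for $k$ even and zero otherwise, and $(\vec{r}_i)_k=b_i^{N-k}c_i^{k}$.

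Next, restrict to the coordinate set $S=\{k:k\text{ even},\,0\le k\le N\}$, of size $|S|=\lceil(N+1)/2\rceil$. The target $\vec{\gamma}|_S$ is supported on $S$ and has strictly alternating signs, hence $|S|-1$ sign changes. Introducing $u_i=b_i^2$ and $v_i=c_i^2$, the entries $b_i^{N-k}c_i^{k}$ at even $k$ become non-negative monomials in $(u_i,v_i)$ (possibly times a global sign that can be absorbed into $\alpha_i$ in the $N$-odd case), so $\vec{r}_i|_S$ is proportional to the non-negative Vandermonde vector $(v_i^{|S|-1},u_iv_i^{|S|-2},\ldots,u_i^{|S|-1})$ parametrized by the single projective node $\tau_i=u_i/v_i\in[0,\infty]$.

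Finally, I would invoke the P\'olya/Descartes-type bound for real Vandermonde combinations: a nontrivial real linear combination of $r$ distinct Vandermonde vectors with non-negative nodes displays at most $r-1$ sign changes across its entries. Since $\vec{\gamma}|_S$ has $|S|-1=\lceil(N+1)/2\rceil-1$ sign changes, the decomposition must use at least $\lceil(N+1)/2\rceil$ distinct nodes $\tau_i$, and since each LMS supplies at most one such node, $n_A\ge\lceil(N+1)/2\rceil$. The main obstacle is justifying this sign-change bound when some nodes hit the boundary values $\tau_i=0$ or $\infty$ (from $b_i=0$ or $c_i=0$); this is handled by viewing those degenerate LMSs as contributing the limit vectors $(1,0,\ldots,0)$ or $(0,\ldots,0,1)$ and verifying that $\{1,\tau,\tau^2,\ldots,\tau^{|S|-1}\}$ together with these two boundary indicators still forms a Chebyshev system on $[0,\infty]$, so the $r-1$ sign-change count remains valid.
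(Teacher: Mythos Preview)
Your proposal is correct and follows essentially the same route as the paper: project onto the $\sigma_X/\sigma_Y$-only symmetric operators, observe that the GHZ off-diagonal part produces an alternating-sign vector on the even-$k$ coordinates, note that each LMS contributes a single Vandermonde-type vector with nonnegative node $\tau_i=(c_i/b_i)^2$, and conclude via a Descartes-type sign-variation bound that at least $\lceil(N+1)/2\rceil$ nodes are needed. The only cosmetic difference is that the paper packages the key lemma in continuous form---it defines $g(x)=\sum_i\alpha_i\beta_i^x$, turns the $\lfloor N/2\rfloor$ sign changes into $\lfloor N/2\rfloor$ real roots via the intermediate value theorem, and then invokes the classical fact that a sum of $n$ exponentials with positive bases has at most $n-1$ real zeros---whereas you invoke the equivalent discrete sign-change bound directly; the degenerate cases $b_i=0$ or $c_i=0$ are likewise treated separately in both arguments (an explicit appendix computation in the paper, your Chebyshev-system remark here).
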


\begin{proof}
The density matrix of the GHZ state can be written in the form \cite{Guhne2007Toolbox}
\begin{equation}\label{}
\begin{aligned}
GHZ_N=\frac{1}{2}\left(\frac{\mathbb{I}+\sigma_Z}{2}\right)^{\otimes N} +\frac{1}{2}\left(\frac{\mathbb{I}-\sigma_Z}{2}\right)^{\otimes N} +\frac{1}{2^{N}}\sum^N_{\mathrm{even}\ k=0}(-1)^{k/2}\sum_{\pi}\sigma_Y^{\otimes k}\sigma_X^{\otimes (N-k)},
\end{aligned}
\end{equation}
where the first two terms account for the diagonal elements and the last one for the off-diagonal elements.

Denote the set of LMSs used to decompose the GHZ state as
\begin{equation}\label{eq_setting}
\begin{aligned}
\mathcal{A}=\{A_i^{\otimes N}=(b_i\sigma_X+c_i\sigma_Y+d_i\sigma_Z)^{\otimes N}|i=1,2,\cdots,n_A\},
\end{aligned}
\end{equation}
and the final operator constructed from $\mathcal{A}$ shown in Eq.~\eqref{Eq:MCDef} as $O$,
\begin{equation}\label{Eq:GHZO}
\begin{aligned}
O=\sum_{i=1}^{n_A}\sum_{j=1}^N\alpha_{ij}\sum_{\pi}\mathbb{I}^{\otimes j}A_i^{\otimes N-j}.
\end{aligned}
\end{equation}
It is assumed that $O=GHZ_N$.

%%i.e., $|\mathcal{M}|=C_S(\ket{D_{N,m}})$.
%
%Suppose the final operator constructed from $\mathcal{A}$ is $\mathcal{O}$.
%
%The idea behind our proof is to find a subspace that is orthogonal with $\ket{D_{n,m}}\bra{D_{n,m}}$, and argue the minimum number of LMSs required to achieve zero-component on that subspace.
%
%Define the set of chosen measurement setting as in (\ref{eq_setting}) and let $\mathcal{O}$ to be the final operator constructed.

Then, consider the projection onto the subspace
\begin{equation}\label{Eq:GHZspace}
\begin{aligned}
\mathrm{span}\{M_{0,N-k,k}|0\le k\le N,\mathrm{even}\ k\},
\end{aligned}
\end{equation}
where $M_{i,j,k}$ is the orthogonal basis defined in (\ref{Eq:Mijk}).

For the GHZ state, we write the projection results on all the basis operators $M_{0,N-k,k}$ in Eq.~\eqref{Eq:GHZspace} in vector form as
\begin{equation}\label{}
\begin{aligned}
v_{GHZ}=\frac{1}{2^N}(1,-1,1,\cdots,(-1)^{\lfloor N/2\rfloor}).
\end{aligned}
\end{equation}
For $O$, it is not hard to see that only the following terms in the summation of Eq.~\eqref{Eq:GHZO} have non-zero projection on this subspace,
\begin{equation}\label{}
\begin{aligned}
\sum_{i=1}^{n_A}\alpha_{i0}A_i^{\otimes N},
\end{aligned}
\end{equation}
and the projection result shows
\begin{equation}\label{Eq:bc0}
\begin{aligned}
v_O=\sum_{i=1}^{n_A}\alpha_{i0}(b_i^N,b_i^{N-2}c_i^2,\cdots ,b_i^{N-2\lfloor N/2\rfloor}c_i^{2\lfloor N/2\rfloor}).
\end{aligned}
\end{equation}

Since $O=GHZ_N$, the projection results should also be equal, i.e., $v_{GHZ}=v_{O}$.
In the following, we show that if $v_G=v_{O}$, the number of LMSs $n_A\ge\lceil\frac{N+1}{2}\rceil$.

Here we focus on the case where $b_i\not=0$ and $c_i\not=0$, and leave the proof of the general case in Appendix \ref{Ap:upGHZ}.
Define $\beta_i=(c_i/b_i)^2$ and it is clear that $\beta_i>0$, then $v_O$ shows
\begin{equation}\label{}
\begin{aligned}
v_O=&\sum_{i=1}^{n_A}\alpha_i(1,\beta_i\cdots,\beta_i^{\lfloor N/2\rfloor}),
\end{aligned}
\end{equation}
where $\alpha_i=\alpha_{i0}b_i^{N}$.

We construct the function
\begin{equation}\label{Eq:GHZgx}
\begin{aligned}
g(x)=\sum_{i=1}^{n_A}\alpha_i\beta_i^x,
\end{aligned}
\end{equation}
which is the linear combination of $n_A$ exponential functions.
Since $v_{O}=v_G$, we have
\begin{equation}\label{}
\begin{aligned}
 g(0)=\frac{1}{2^N},\ g(1)=-\frac{1}{2^N},\ \ g(2)=\frac{1}{2^N},\ \cdots,\ g(\lfloor N/2\rfloor)=(-1)^{\lfloor N/2\rfloor}\frac{1}{2^N}.
\end{aligned}
\end{equation}
and it is clear that $g(x)$ changes its sign with respect to adjacent points.
On account of the continuity of $g(x)$, there is at least one root of $g(x)=0$ in each of the intervals $(0,1),\ (1,2),\ \cdots,\ (\lfloor N/2\rfloor-1,\lfloor N/2\rfloor)$. Totally, there are at least $\lfloor N/2\rfloor$ roots.

On the other hand, it is known that $g(x)$ has $n_A-1$ roots at most \cite{tossavainen2006zeros}, as shown in Lemma \ref{Lem:Exp} below. Consequently, one has $n_A-1\geq\lfloor N/2\rfloor$, i.e., $n_A\geq \lceil\frac{N+1}{2}\rceil$.
\end{proof}

\begin{lemma}\label{Lem:Exp}
\cite{tossavainen2006zeros}
%$\forall n\in \mathbb{Z}^+$, let $\beta_1,\beta_2,\cdots, \beta_n$ with $\beta_i\not=\beta_j,\beta_i>0$, and let $m$ real numbers $\alpha_1,\alpha_2\cdots, \alpha_n$ with $\alpha_i\not=0$,
For real numbers $\{\alpha_i\}_1^n$ and $\{\beta_i\}_1^n$ with $\alpha_i \neq 0$, $\beta_i>0$, and $\beta_i\not=\beta_j$ for $i\neq j$, the function
\begin{equation}
\begin{aligned}
g(x)=\sum_{i=1}^{n}\alpha_i\beta_i^x
\end{aligned}
\end{equation}
has at most $n-1$ roots.
\end{lemma}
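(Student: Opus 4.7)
The plan is to prove the lemma by induction on $n$, using Rolle's theorem to reduce a linear combination of $n$ exponentials to one of $n-1$ exponentials. The base case $n=1$ is immediate: $g(x)=\alpha_1\beta_1^x$ is nowhere zero since $\alpha_1\neq 0$ and $\beta_1>0$, so it has $0 = n-1$ roots.

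For the inductive step, suppose the statement holds for $n-1$ and consider $g(x)=\sum_{i=1}^{n}\alpha_i\beta_i^x$. First I would normalize by dividing through by $\beta_n^x$ (any fixed index works), which is strictly positive and hence does not change the zero set of $g$. This gives
\begin{equation*}
\tilde g(x) \;=\; \alpha_n + \sum_{i=1}^{n-1}\alpha_i\,\gamma_i^{\,x}, \qquad \gamma_i := \beta_i/\beta_n > 0,
\end{equation*}
with the $\gamma_i$ still pairwise distinct and, crucially, all $\gamma_i\neq 1$ (since the $\beta_i$ were distinct). Differentiating once kills the constant term:
\begin{equation*}
\tilde g'(x) \;=\; \sum_{i=1}^{n-1}\alpha_i\,(\ln\gamma_i)\,\gamma_i^{\,x}.
\end{equation*}
The new coefficients $\alpha_i\ln\gamma_i$ are all nonzero (this is exactly where $\gamma_i\neq 1$ is used), the bases $\gamma_i$ are positive and distinct, so the inductive hypothesis applies to $\tilde g'$ and yields at most $n-2$ roots. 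By Rolle's theorem, between any two distinct zeros of $\tilde g$ there lies a zero of $\tilde g'$; hence if $\tilde g$ had $k$ zeros, $\tilde g'$ would have at least $k-1$. Combining, $k-1\le n-2$, i.e.\ $k\le n-1$, completing the induction.

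The step that requires care is ensuring that after differentiation none of the coefficients $\alpha_i\ln\gamma_i$ collapses to zero, which would violate the nonvanishing-coefficient hypothesis of the inductive step. This is handled by the preliminary normalization: dividing by $\beta_n^x$ rather than some other factor guarantees $\gamma_i=\beta_i/\beta_n\neq 1$ for all $i\le n-1$, so $\ln\gamma_i\neq 0$. A minor bookkeeping point is that Rolle's theorem is being applied to a smooth real-valued function on $\mathbb{R}$ with finitely many zeros; the inductive bound $n-2<\infty$ guarantees that the zero set of $\tilde g'$ is finite, which retroactively justifies the finiteness of the zero set of $\tilde g$ used in the Rolle step. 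No further analytic input is needed beyond continuity and differentiability of exponentials.
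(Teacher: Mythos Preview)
Your argument is correct: the induction via Rolle's theorem, after normalizing by $\beta_n^x$ so that differentiation removes exactly one term while keeping all remaining coefficients nonzero, is the standard way to establish this bound. Note that the paper does not actually prove Lemma~\ref{Lem:Exp} itself but simply cites \cite{tossavainen2006zeros}; the proof there is precisely this Rolle-based induction, so your approach matches the intended one.
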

The proof of Lemma \ref{Lem:Exp} can be found in \cite{tossavainen2006zeros}.

Then, we give the measurement complexity lower bounds of the Dicke state as well as the W state.
\begin{theorem}\label{Th:LowDicke}
The measurement complexity of the Dicke state $\ket{D_{N,m}}$ is lower bounded by
\begin{equation}\label{}
\begin{aligned}
C_S(\ket{D_{N,m}})\ge N-2m+1.
\end{aligned}
\end{equation}
\end{theorem}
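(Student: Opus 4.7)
The plan is to adapt the template of Theorem~\ref{Th:lowGHZ}: find a subspace on which the projection of $D_{N,m}$ has a rigid polynomial structure, and then show that no decomposition of $O$ using fewer than $N-2m+1$ LMSs can reproduce that structure. I would choose the subspace $V_{XZ}=\mathrm{span}\{M_{0,j,0}:0\le j\le N\}$, the symmetric operators built from $\sigma_X$ and $\sigma_Z$ only. Two nice things happen at once: the absence of identities kills every term $\sum_{\pi}\mathbb{I}^{\otimes j}A_i^{\otimes N-j}$ with $j\ge 1$, so only the pure products $A_i^{\otimes N}$ contribute; and the absence of $\sigma_Y$ collapses each $\Theta_t$ of Eq.~\eqref{Eq:Theta} to a single multiple of $M_{0,2t,0}$ once $\ket{0}\bra{1},\ket{1}\bra{0}$ are written in Paulis.

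Next I would compute both projections. A direct coefficient count starting from Eq.~\eqref{Eq:DecomDicke} gives $D_{N,m}|_{V_{XZ}}=\sum_{t=0}^m c_{2t}\,M_{0,2t,0}$; the only features of $\{c_{j'}\}$ I will actually use are $c_{2m}\ne 0$ and $c_{j'}=0$ for $j'>2m$. On the decomposition side, $A_i^{\otimes N}|_{V_{XZ}}=(b_i\sigma_X+d_i\sigma_Z)^{\otimes N}=\sum_{j'=0}^{N}b_i^{\,j'}d_i^{\,N-j'}M_{0,j',0}$, so matching coefficients gives
\begin{equation*}
\sum_{i=1}^{n_A}\alpha_{i,0}\,b_i^{\,j'}\,d_i^{\,N-j'}=c_{j'},\qquad j'=0,1,\ldots,N.
\end{equation*}

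The key move is to repackage this sequence of $N+1$ scalar equations as one homogeneous polynomial identity. Let $F(X,Y):=\sum_{j'=0}^{N}\binom{N}{j'}c_{j'}X^{j'}Y^{N-j'}$; the equations above then read exactly $F(X,Y)=\sum_{i=1}^{n_A}\alpha_{i,0}(b_iX+d_iY)^N$, i.e., a Waring-type decomposition of the binary form $F$. Because $c_{j'}=0$ for $j'>2m$, I can factor $F=Y^{N-2m}G(X,Y)$ with $G$ of $X$-degree exactly $2m$ (its leading coefficient is $\binom{N}{2m}c_{2m}\ne 0$). Applying $\partial_X^{2m+1}$ to both sides then annihilates the left-hand side and leaves
\begin{equation*}
\sum_{i=1}^{n_A}\alpha_{i,0}\,b_i^{\,2m+1}(b_iX+d_iY)^{N-2m-1}=0.
\end{equation*}

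To finish, I would invoke the classical linear independence of binary powers: after combining any summands with proportional $(b_i,d_i)$ (which only reduces $n_A$), the linear forms $\ell_i=b_iX+d_iY$ are pairwise non-proportional, and any $r\le N-2m$ such $(N-2m-1)$-th powers are linearly independent in the $(N-2m)$-dimensional space of binary forms of that degree. Assuming $n_A\le N-2m$ would therefore force $\alpha_{i,0}b_i^{\,2m+1}=0$ for every $i$, pinning $b_i=0$ on each surviving term and collapsing $F$ to a scalar multiple of $Y^N$ — incompatible with the nonzero $X^{2m}Y^{N-2m}$ coefficient of $F$ for $m\ge 1$. Hence $n_A\ge N-2m+1$. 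I expect the main obstacle to be less conceptual than bookkeeping: the explicit Pauli expansion of $\Theta_t$ needed to certify $c_{2m}\ne 0$ and $c_{j'}=0$ for $j'>2m$, together with clean treatment of the degenerate boundary cases $b_i=0$ or $d_i=0$, which is exactly why the projective binary-form reformulation is the right language.
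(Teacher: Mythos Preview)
Your argument is correct and, at its core, is the same as the paper's: both project onto $\mathrm{span}\{M_{0,j,0}\}$, use that $D_{N,m}$ has vanishing components there for $j>2m$ but a nonzero one at $j=2m$, and reduce to a Vandermonde-type linear-independence statement. The paper carries this out by splitting off $V_1=\mathrm{span}\{M_{0,j,0}:2m{+}1\le j\le N\}$ (where $D_{N,m}$ vanishes) and $V_2=\mathrm{span}\{M_{0,j,0}:1\le j\le 2m\}$ (where it does not), writing the projection on $V_1$ as a Vandermonde combination in $\beta_i=b_i/d_i$, and then treating $d_i=0$ by a separate case analysis. Your repackaging as the single binary-form identity $F(X,Y)=\sum_i\alpha_{i,0}(b_iX+d_iY)^N$ followed by $\partial_X^{2m+1}$ is the apolarity version of exactly the same Vandermonde step; its payoff is precisely what you anticipate, namely that the degenerate cases $b_i=0$ or $d_i=0$ are absorbed into the projective language with no case split. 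One phrasing to tighten: combining summands with proportional $(b_i,d_i)$ does not literally reduce $n_A$ (distinct LMSs can share the ratio $(b_i{:}d_i)$ while differing in $c_i$); rather it yields $r\le n_A$ pairwise non-proportional linear forms with combined coefficients $\tilde\alpha_g$, and linear independence then forces $\tilde\alpha_g\,b_g^{2m+1}=0$ for each group $g$, so every surviving group has $b_g=0$ and $F$ collapses to a multiple of $Y^N$---the contradiction with $c_{2m}\ne 0$ goes through exactly as you say.
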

As a result of Theorem \ref{Th:LowDicke}, we also have the lower bound for the $W$ state.
\begin{corollary}
The measurement complexity of the $N$-qubit $W$ state $\ket{W_N}$ is lower bounded by
\begin{equation}\label{}
\begin{aligned}
C_S(\ket{W_N})\ge N-1.
\end{aligned}
\end{equation}
\end{corollary}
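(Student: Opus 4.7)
The plan is to follow the template of Theorem~\ref{Th:lowGHZ}: suppose $O = D_{N,m}$ for an LMS decomposition $O = \sum_{i,j}\alpha_{ij}\sum_\pi\mathbb{I}^{\otimes j}A_i^{\otimes N-j}$ with $A_i = b_i\sigma_X + c_i\sigma_Y + d_i\sigma_Z$, project both sides onto a carefully chosen subspace, and extract a Vandermonde-type constraint that forces $n_A \geq N-2m+1$.

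First I would project onto $V = \mathrm{span}\{M_{0,j_0,0}: 0\leq j_0\leq N\}$, the symmetric operators with no identities and no $\sigma_Y$'s. Because each basis element has zero $\mathbb{I}$ count, only the $j=0$ terms of $O$ contribute, giving $[O]_{M_{0,j_0,0}} = \sum_i\alpha_{i0}\,b_i^{j_0}d_i^{N-j_0}$, exactly parallel to Eq.~\eqref{Eq:bc0}. To compute the Dicke projection I would use Eq.~\eqref{Eq:DecomDicke} together with the expansion of $\chi_t$ in Eq.~\eqref{Eq:DickeXY}: since $\chi_t$ produces only Pauli terms with $j_0+k_0=2t$ and has no $\mathbb{I}$ or $\sigma_Z$ content, the block $\Theta_t$ contributes to $M_{0,j_0,0}$ precisely when $j_0=2t$ (forcing $k_0=0$) and $t\leq m$. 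Consequently $[D_{N,m}]_{M_{0,j_0,0}}$ vanishes for every $j_0>2m$, whereas $[D_{N,m}]_{M_{0,0,0}}\neq 0$ because $\sigma_Z^{\otimes N}\ket{D_{N,m}}=(-1)^m\ket{D_{N,m}}$.

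In the generic subcase $d_i\neq 0$ and $\rho_i:=b_i/d_i$ pairwise distinct, I would set $\mu_i:=\alpha_{i0}d_i^N$ and define $g(j_0):=\sum_i\mu_i\rho_i^{j_0}$. The identity $O=D_{N,m}$ then forces $g(j_0)=0$ at the $N-2m$ consecutive integers $j_0=2m+1,2m+2,\ldots,N$. Factoring $\rho_i^{2m+1}$ out of each column reduces this to a standard Vandermonde system $\sum_i\tilde\mu_i\rho_i^k=0$ for $k=0,1,\ldots,N-2m-1$, whose $(N-2m)\times n_A$ matrix has full column rank $\min(n_A,N-2m)$. If $n_A\leq N-2m$ the kernel is trivial, so all $\mu_i=0$ and thus $g(0)=0$, contradicting $g(0)=[D_{N,m}]_{M_{0,0,0}}\neq 0$. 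Hence $n_A\geq N-2m+1$, and the W state corollary follows as the special case $m=1$. (Equivalently, in the further case $\rho_i>0$, Lemma~\ref{Lem:Exp} applied to $g$ directly yields $n_A-1\geq N-2m$.)

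Finally, I would handle the degenerate subcases---some $d_i=0$, some $\rho_i$ coinciding, or some $\rho_i$ of mixed sign---in an appendix modeled on Appendix~\ref{Ap:upGHZ}. LMSs with $d_i=0$ contribute to $[O]_{M_{0,j_0,0}}$ only at $j_0=N$, so they affect just one boundary equation without disturbing the Vandermonde block on $\{2m+1,\ldots,N-1\}$; coinciding $\rho_i$'s are absorbed into a smaller number $r\leq n_A$ of effective parameters, and the same argument then forces $r\geq N-2m+1$. The main obstacle I anticipate is precisely this bookkeeping: verifying that after all mergers, sign handlings, and special settings, the effective Vandermonde system still has the claimed column rank, so that $n_A\geq N-2m+1$ is preserved in full generality.
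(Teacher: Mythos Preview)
Your approach is essentially the paper's proof of Theorem~\ref{Th:LowDicke} (Appendix~\ref{Ap:lowDicke}): project onto $\mathrm{span}\{M_{0,j,0}\}$, observe that $D_{N,m}$ has zero component there for all $j>2m$, and extract a Vandermonde constraint in the ratios $b_i/d_i$ forcing $n_A\ge N-2m+1$, with the corollary following as the case $m=1$. The only difference is the nonvanishing ``anchor'' used for the contradiction: you invoke $g(0)=[D_{N,m}]_{M_{0,0,0}}\neq0$, whereas the paper uses the projection on $V_2=\mathrm{span}\{M_{0,j,0}:1\le j\le 2m\}$; the paper's choice is slightly more robust because settings with $b_i=0$ (hence $\rho_i=0$, a degenerate subcase you did not list) contribute nothing to $V_2$ either but \emph{can} contribute to $g(0)$, so from $\tilde\mu_i=\mu_i\rho_i^{2m+1}=0$ you cannot infer $\mu_i=0$ and hence $g(0)=0$ in that case.
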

There is a known decomposition of $\ket{W_N}$ using $2N-1$ LMSs \cite{Guhne2007Toolbox}, which is consistent with our lower bound. This means that one should make use of $\Theta(N)$ LMSs to evaluate the fidelity with the $W$ state.

The proof of Theorem \ref{Th:LowDicke} is similar to that of Theorem \ref{Th:lowGHZ} for the GHZ state. We find a specific subspace where the state $D_{N,m}$ has zero projection. On the other hand, we show that there should be at least $N-2m+1$ LMSs in the decomposition of $D_{N,m}$, in order to make the projection also be zero. The detailed proof is reported in Appendix \ref{Ap:lowDicke}.
%We can apply the same argument as W state and conclude that there are at least $N-2m$ measurement settings where $N-2m$ is the dimension of $span\{M_{0,i,0}|2m+1\le i\le N\}$.
%\subsection{An upper bound for the measurement complexity of Dicke States}

\section{conclusion and outlook}\label{sec:final}
In this paper, by introducing a set of product bases for the symmetric subspace, we show that with $(N+1)(N+2)/2$ LMSs, one can decompose a symmetric observable and evaluate the fidelity between an unknown prepared state and any PI state. For some typical PI states, such as the $GHZ$ state, the $W$ state, and the Dicke state with constant number excitations, we can further reduce the measurement complexity down to the linear regime. %By defining measurement complexity as the minimal number of LMSs to decompose them, we find their measurement complexities all scale as $\Theta(N)$.
%In addition, our scheme also show advantage in entanglement detection. By taking GHZ as an example , we show that one can effectively eliminates some coherent noise by post-processing of measurement results.

There are a few prospective problems that can be explored in the future. First, it is interesting to show whether the measurement complexity of Dicke states with $\Theta(N)$ excitations, such as $\ket{D_{N,\frac{N}{2}}}$, is still $\Theta(N)$. Second, besides the $GHZ$ state, the $W$ state, and the Dicke state, one might also reduce the measurement complexity for other PI states using similar decomposition techniques. Third, our decomposition technique focuses on the party permutation symmetry but it might also be extended to other types of symmetry, such as the permutation symmetry of eigenstates in high dimensional systems. In addition, the observable decomposition method can be directly applied to entanglement detection, by constructing the corresponding fidelity-based entanglement witnesses \cite{TERHAL2002Detecting,GUHNE2009Entanglement}, where further reduction of the measurement complexity is expected \cite{Guhne2007Toolbox,Qi2019Efficient}. In fact, we find that this kind of construction can yield better witness operators considering entanglement detection under coherent noises \cite{You2019coherent}.

\acknowledgments
Y.~Z.~and C.~G.~contribute equally to this work. We acknowledge Q.~Zhao and X.~Yuan for the insightful discussions. This work was supported by the National Natural Science Foundation of China Grants No.~11875173 and No.~11674193, and the National Key R\&D Program of China Grant No.~2017YFA0303900.
%and No.~2017YFA0304004.

\appendix
\section{Proof of Theorem \ref{Th:basis}}\label{prf:thm1}
Here we give the proof of Theorem \ref{Th:basis}, which constructs a set of product-state bases for the symmetric subspace $\mathrm{Sym}_N(\mathcal{H}_d)$, that is,
\begin{equation}\label{Eq:basisA}
\begin{aligned}
\mathcal{B}=\left\{\ket{\Phi_{\vec{j}}}=\Big(a_{0,j_0}\ket{0}+a_{1,j_1}\ket{1}+\cdots+a_{d-1,j_{d-1}}\ket{d-1}\Big)^{\otimes N}\right\},
\end{aligned}
\end{equation}
where $\sum_{k=0}^{d-1} j_k=N$ and the coefficients $a_{k,j_k}$ are selected from any matrix denoted $A_{d,N}$ satisfying Eq.~\eqref{Eq:Adn}. Since $a_{0,j_0}=1$ for any $j_0$, the basis set in Eq.~\eqref{Eq:basisA} becomes,
\begin{equation}\label{}
\begin{aligned}
\mathcal{B}=\left\{\ket{\Phi_{\vec{j}}}=\Big(\ket{0}+a_{1,j_1}\ket{1}+\cdots+a_{d-1,j_{d-1}}\ket{d-1}\Big)^{\otimes N}\right\},
\end{aligned}
\end{equation}
and the constraint of $j_k$ can be replaced with $\sum_{k=1}^{d-1} j_k\leq N$.

On the other hand, as shown in Eq.~\eqref{Eq:Dbasisd}, there is another set of orthogonal (unnormalized) basis of $\mathrm{Sym}_N(\mathcal{H}_d)$ showing \cite{Harrow2013symmetric},
\begin{equation}\label{}
\begin{aligned}
\left\{\ket{\Psi_{\vec{i}}}=\sum_{\pi}\ket{0}^{\otimes i_0}\ket{1}^{\otimes i_1}\cdots \ket{d-1}^{\otimes i_{d-1}}\right\},
\end{aligned}
\end{equation}
where $\vec{i}=(i_0,i_1,\cdots,i_{d-1})$ a $d$-dimensional vector, with $i_k\in \mathbb{N}$ and $\sum_{k=0}^{d-1} i_k=N$. It is the generalization of Eq.~\eqref{Eq:SymBasis2} from the qubit to the qudit case.

The number of vectors in $\mathcal{B}$ is $D_{S}=\binom{N+d-1}{d-1}=\frac{(N+d-1)!}{N!(d-1)!}$, the dimension of $\mathrm{Sym}_N(\mathcal{H}_d)$. Thus one only needs to show that the vectors in $\mathcal{B}$ are linearly independent. Here, we write $\ket{\Phi_{\vec{j}}}$ in the $\{\ket{\Psi_{\vec{i}}}\}$ basis, and the result shows
\begin{equation}\label{}
\begin{aligned}
\ket{\Phi_{\vec{j}}}=\sum_{\vec{i}}\prod_{k=1}^{d-1}a_{k,j_k}^{i_k}\ket{\Psi_{\vec{i}}}.
\end{aligned}
\end{equation}
The corresponding coefficient matrix of all $\{\ket{\Phi_{\vec{j}}}\}$ shows,
\begin{equation}\label{}
\begin{aligned}
M^{A_{d,n}}_{\vec{i},\vec{j}}=\bra{\Psi_{\vec{i}}}\Phi_{\vec{j}}\rangle=\prod_{k=1}^{d-1}a_{k,j_k}^{i_k},
\end{aligned}
\end{equation}
where each column vector is the coefficient of $\ket{\Phi_{\vec{j}}}$. We say $A_{d,N}$ \emph{generating} $M^{A_{d,N}}$.

In the following we show that the determinant of $M^{A_{d,N}}$ is non-zero, by utilizing the induction method on both $d$ (the local dimension) and $N$ (the qudit number). First, when $d=1$ or $N=1$, it is not hard to check that $\{\ket{\Phi_{\vec{j}}}\}$ are linearly independent and form a basis set.

For general $d$ and $N$, we do the following row transformation on $M^{A_{d,N}}$: for the row index $\vec{i}$ satisfying $i_1=0$, keep these rows unchanged; for the other ones with $1\leq i_1\leq N$, we find the corresponding row with index $\vec{i'}=(i_0+1, i_1-1, i_2, i_3, \cdots, i_{d-1})$, and subtract $a_{1,0}$ multiplying this row $\vec{i'}$. Note that we do this transformation in order from $i_1=N$ to $i_1=1$, and the resulting matrix shows,
\begin{equation}\label{Eq:Mtrans}
M'_{\vec{i},\vec{j}}=\left\{
\begin{aligned}
&\prod_{k=1}^{d-1}a_{k,j_k}^{i_k},&i_1=0\\
&\prod_{k=1}^{d-1}a_{k,j_k}^{i_k}-a_{1,0}\prod_{k=1}^{d-1}a_{k,j_k}^{i_k-\delta(k-1)}\\
&=(a_{1,j_1}-a_{1,0})\prod_{k=1}^{d-1}a_{k,j_k}^{i_k-\delta(k-1)},&1\leq i_1\leq N\\
\end{aligned}
\right.
\end{equation}
where the function $\delta(k-1)=1$ as $k=1$, and otherwise it equals zero.

From Eq.~\eqref{Eq:Mtrans}, one can see that for the matrix elements $M_{\vec{i},\vec{j}}$ satisfying $1\leq i_1\leq N$ and $j_1=0$ equal to zero. That is, $M'_{\vec{i},\vec{j}}$ is an upper triangular block matrix, i.e.,
\begin{equation}
 M'_{\vec{i},\vec{j}}=\left(
 \begin{array}{cc}
    M^1_{\vec{i},\vec{j}} & M_\urcorner\\
    0 & M^2_{\vec{i},\vec{j}}\\
  \end{array}
\right),
\end{equation}
where $M^1_{\vec{i},\vec{j}}$ ($M^2_{\vec{i},\vec{j}}$) is a square matrix which is located in the rows $i_1=(>)0$ and columns $j_1=(>)0$, and $M_\urcorner$ is the off-diagonal part. In the following, we show that the
determinants of $M^1_{\vec{i},\vec{j}}$ and $M^2_{\vec{i},\vec{j}}$ are both non-zero by induction.

Since $i_1=j_1=0$, $M^1_{\vec{i},\vec{j}}$ is generated by the matrix,
\begin{equation}
A^1_{d-1,N}=\left\{a_{k,j}^1=a_{k+1,j}\Big|k\in\{1,\cdots,d-2\},j\in \{0,1,\cdots,N\}\right\}.
\end{equation}
Actually, the matrix $A^1_{d-1,N}$ is related to a $d-1$ local dimension and $N$-fold tensor.
By induction principle, we can get that $\det(M^1_{\vec{i},\vec{j}})\neq 0$.

For the other submatrix $M^2_{\vec{i},\vec{j}}$ with $i_1, j_1\geq1$ shown in Eq.~\eqref{Eq:Mtrans}, the precoefficient $(a_{1,j_1}-a_{1,0})\neq 0$ , and it remains the same in the same column. Since we only care about the non-zero property of $\det (M^2_{\vec{i},\vec{j}})$, we can eliminate these unimportant pre-coefficients and check the determinant of the remaining matrix,
%\begin{equation}
%\det M_{\vec{i},\vec{j}}^2=\prod_{j=1}^N(a_{1,j}-a_{1,0})^{\binom{N+d-j-2}{d-2}}\det\left[\prod_{l=1}^{d-1}a_{l,j(l)}^{i(l)-I[l=1]}\right]_{\vec{i},\vec{j}}.
%\end{equation}
\begin{equation}
M_{\vec{i},\vec{j}}^3=\prod_{k=1}^{d-1}a_{k,j_k}^{i_k-\delta(k-1)}.
\end{equation}
Denoting a new $i_1'=i_1-1$ and $i_k'=i_k$ for $k\neq1$, then one has $\sum_k i_k'=N-1$. In this way, it is not hard to see that $M_{\vec{i},\vec{j}}^3$ can be generated by the matrix,
\begin{equation}
A^3_{d,N-1}=\left\{a_{k,j}^3\Big|k\in\{1,\cdots,d-1\},j\in \{0,1,\cdots,N-1\}\right\},
\end{equation}
where
\begin{equation}\label{}
a_{k,j}^3=\left\{
\begin{aligned}
&a_{k,j+1},k=1,\\
&a_{k,j},k=2,\cdots,d-1.\\
\end{aligned}
\right.
\end{equation}
In fact, $A^3_{d,N-1}$ is related to a $d$ local dimension and $(N-1)$-fold tensor. Again by induction principle, we have that $\det(M^3_{\vec{i},\vec{j}})\neq 0$, and thus $\det (M^2_{\vec{i},\vec{j}})\neq 0$.

Consequently, $\det(M^{A_{d,n}})=\det (M^1_{\vec{i},\vec{j}})\det (M^2_{\vec{i},\vec{j}})\neq 0$, and $\{\ket{\Phi_{\vec{j}}}\}$ is a basis set of the symmetric subspace $\mathrm{Sym}_N(\mathcal{H}_d)$.
\section{Decomposition of the PI state in the product-form basis}\label{Ap:decomp}
In this section, we show explicitly how to efficiently decompose a general PI state in the product-form basis.

The product operators in $\mathcal{B}_o$ in Eq.~\eqref{Eq:basisOp1} form a basis of the operator symmetric subspace $\mathrm{Sym}_N(G_1)$,
\begin{equation}\label{Eq:Ap:Bop}
\begin{aligned}
\mathcal{B}_o=\left\{O_{\vec{\alpha}}=(a_i\mathbb{I}+b_j\sigma_X+c_k\sigma_Y+\sigma_Z)^{\otimes N}\bigg|0\le i,j,k\le N,\; i+j+k\le N\right\}.
\end{aligned}
\end{equation}
where we denote the product operator $O_{\vec{\alpha}}=(a_i\mathbb{I}+b_j\sigma_X+c_k\sigma_Y+\sigma_Z)^{\otimes N}$, with $\vec{\alpha}=\{i,j,k\}$ being a three-dimensional vector as the index. In general, these linearly independent operators $O_{\vec{\alpha}}$ may be not orthogonal. Meanwhile, there is another orthogonal basis of $\mathrm{Sym}_N(G_1)$ shown in Eq.~\eqref{Eq:Oset1},
\begin{equation}\label{}
\begin{aligned}
M_{\vec{\beta}}\doteq M_{i,j,k}
=\sum_{\pi}\mathbb{I}^{\otimes i}\otimes \sigma_X^{\otimes j}\otimes\sigma_Y^{\otimes k}\otimes\sigma_Z^{\otimes (N-i-j-k)}.
\end{aligned}
\end{equation}
where $\vec{\beta}=\{i,j,k\}$ also denotes a three-dimensional vector as the index of $M_{i,j,k}$.

We show in the following how to express a general PI state $\Psi^{PI}$ in the product-form basis $\{O_{\vec{\alpha}}\}$, i.e.,
\begin{equation}\label{}
\begin{aligned}
\Psi^{PI}=\sum_{\vec{\alpha}} \gamma_{\vec{\alpha}} O_{\vec{\alpha}},
\end{aligned}
\end{equation}
where $\gamma_{\vec{\alpha}}$ are real coefficients that we need to figure out.

Our strategy is as follows. First, decompose the PI state on the orthogonal basis $\{M_{\vec{\beta}}\}$ as,
\begin{equation}\label{}
\begin{aligned}
\Psi^{PI}=\sum_{\vec{\beta}} \gamma'_{\vec{\beta}} M_{\vec{\beta}}.
\end{aligned}
\end{equation}
Since the basis operators $M_{\vec{\beta}}$ are orthogonal, the coefficient can be efficiently obtained,
\begin{equation}\label{}
\begin{aligned}
\gamma'_{\vec{\beta}}=c_{\vec{\beta}}\mathrm{Tr}(M_{\vec{\beta}}\Psi^{PI})
\end{aligned}
\end{equation}
where $c_{\vec{\beta}}=\frac{i!j!k!(N-i-j-k)!}{2^NN!}$ is the normalization constant.

Second, do the basis transformation between $M_{\vec{\beta}}$ and $O_{\vec{\alpha}}$. The elements of the basis transformation matrix can be obtained by expressing $O_{\vec{\alpha}}$ on $M_{\vec{\beta}}$, that is,
\begin{equation}\label{}
\begin{aligned}
\Omega_{\vec{\beta},\vec{\alpha}}=c_{\vec{\beta}}\mathrm{Tr}(M_{\vec{\beta}}O_{\vec{\alpha}})
\end{aligned}
\end{equation}

As a result,
\begin{equation}\label{}
\begin{aligned}
\Psi^{PI}&=\sum_{\vec{\alpha}} \gamma_{\vec{\alpha}} O_{\vec{\alpha}}\\
&=\sum_{\vec{\alpha},\vec{\beta}} \gamma_{\vec{\alpha}} \Omega_{\vec{\beta},\vec{\alpha}} M_{\vec{\beta}}\\
&=\sum_{\vec{\beta}} \gamma'_{\vec{\beta}}  M_{\vec{\beta}}.
\end{aligned}
\end{equation}
Thus, one has $\gamma'_{\vec{\beta}}=\sum_{\vec{\alpha}}\Omega_{\vec{\beta},\vec{\alpha}}\gamma_{\vec{\alpha}}$ and $\gamma_{\vec{\alpha}}=\Omega^{-1}\gamma'_{\vec{\beta}}$. Note that the inverse of the matrix $\Omega$ can be evaluated efficiently by the numerical method, since the dimension of the matrix is $\binom{N+2}{2}$.

As a result, by measuring the expectation values of product operators $\langle O_{\vec{\alpha}}\rangle$, one can get the fidelity $\langle \Psi^{PI}\rangle=\sum_{\vec{\alpha}}\gamma_{\vec{\alpha}} \langle O_{\vec{\alpha}}\rangle$ with respective to any PI state $\Psi^{PI}$, by only post-processing the measurement results.

Moreover, as shown in Eq.~\eqref{Eq:Oset1}, one can choose other possible product-form bases by changing the parameters of the local operators that satisfy Eq.~\eqref{Eq:Adn}. Different product-from bases may show different noise tolerances in practical application. For instance, there is some basis choice, where two product operators are too ``close" (even though they are linearly independent), such that they return almost the same result under some measurement imperfection. Thus, we suggest the following selection method, which makes the basis operators be distributed as ``evenly'' as possible. One chooses the coefficients in Eq.~\eqref{Eq:Ap:Bop} as follows. $a_i=\tan \theta_i$ with $\theta_i=\frac{i\pi}{N+1}$, $b_j=\tan \theta_j$ with $\theta_j=\frac{j\pi}{N+1}$, and $c_k=\tan \theta_k$ with $\theta_k=\frac{k\pi}{N+1}$.

Finally, we would like to remark that the method shown above can also be applied to decompose general symmetric operators directly, which may be useful in other related problems.
\section{Proof of Theorem. \ref{Th:lowGHZ} when $b_i=0$ or $c_i=0$ in Eq.~\eqref{Eq:bc0}}\label{Ap:upGHZ}
Here we give the proof of Theorem \ref{Th:lowGHZ} in the case where $b_i=0$ or $c_i=0$ in Eq.~\eqref{Eq:bc0}. Remember that in the main text, we project the operator $GHZ_N$ and $O$  on a specific subspace in Eq.~\eqref{Eq:GHZspace}, and the projection results are, respectively,
\begin{equation}\label{}
\begin{aligned}
v_{GHZ}=\frac{1}{2^N}(1,-1,1,\cdots,(-1)^{\lfloor N/2\rfloor}),
\end{aligned}
\end{equation}
\begin{equation}\label{}
\begin{aligned}
v_O=\sum_{i=1}^{n_A} \alpha_{i0}(b_i^N,b_i^{N-2}c_i^2,\cdots ,b_i^{N-2\lfloor N/2\rfloor}c_i^{2\lfloor N/2\rfloor}).
\end{aligned}
\end{equation}
Then we define a function $g(x)$ in Eq.~\eqref{Eq:GHZgx} which is a summation of several exponential functions, and use the root property of it to bound the number of LMSs.

Here we consider the general case where $b_i=0$ or $c_i=0$. Let $S$ denote the set of $i$ with both $b_i$ and $c_i$ not equal to $0$, and define $\beta_i=(c_i/b_i)^2$ only on $S$. Let $S_b$ denote the set of $i$ with $c_i=0$, $b_i\not=0$ and let $S_c$ denote the set of $i$ with $b_i=0$, $c_i\not=0$. Then $v_O$ can be written as
\begin{equation}\label{}
\begin{aligned}
v_O=&\sum_{i\in S}\alpha_i(1,\beta_i\cdots,\beta_i^{\lfloor N/2\rfloor})+(\alpha_b,0,\cdots ,0)+(0,\cdots ,0,\alpha_c),
\end{aligned}
\end{equation}
where $\alpha_i=\alpha_{i0}b_i^{N}$, $\alpha_b=\sum_{i\in S_b}\alpha_{i0}b_i^{N}$ and $\alpha_c=\sum_{i\in S_c}\alpha_{i0}b_i^{N-2\lfloor N/2\rfloor}c_i^{2\lfloor N/2\rfloor}$. It is clear that

\begin{equation}\label{Eq:SetCount}
\begin{aligned}
n_A\geq |S|+|S_b|+|S_c|.
\end{aligned}
\end{equation}

And we define $g(x)$ of the set $S$ as,
\begin{equation}\label{}
\begin{aligned}
g(x)=\sum_{i\in S}\alpha_i\beta_i^x.
\end{aligned}
\end{equation}

Since $v_O=v_G$, $g(0)=1-\alpha_b,\ g(1)=-1,\ \cdots,\ g(\lfloor N/2\rfloor)=(-1)^{\lfloor N/2\rfloor}-\alpha_c$.
Because the continuity of $g(x)$, there is at least one root of $g(x)$ in each of the intervals $(1,2),\ (2,3),\ \cdots,\ (\lfloor N/2\rfloor-2,\lfloor N/2\rfloor-1)$, which means $\lfloor N/2\rfloor-2$ roots in total.
If $\alpha_b=0$,  there is at least another root in $(0,1)$. Similarly, if $\alpha_c=0$,  there is at least another root in $(\lfloor N/2\rfloor-1,\lfloor N/2\rfloor)$.

Therefore, the number of roots of $g(x)$ is at least
\begin{equation}\label{}
\begin{aligned}
&\lfloor N/2\rfloor-2+I(\alpha_b=0)+I(\alpha_c=0)\\
=& \lfloor N/2\rfloor-I(\alpha_b\not=0)-I(\alpha_c\not=0)\\
 \ge&\lfloor N/2\rfloor-|S_b|-|S_c|,
\end{aligned}
\end{equation}
where $I(x)$ denote that function that $I(x)=0$ or $1$ when $x$ is ture or false.

Then apply Lemma \ref{Lem:Exp} to $g(x)$, we get $|S|-1\ge \lfloor N/2\rfloor-|S_b|-|S_c|$. Combining this with \eqref{Eq:SetCount}, one has $n_A\ge \lceil\frac{N+1}{2}\rceil$.
\section{Proof of Theorem \ref{Th:LowDicke}}\label{Ap:lowDicke}
Here we give the proof of Theorem \ref{Th:LowDicke}. As mentioned in the text, we find a subspace where $D_{N,m}$ has zero projection, and at the same time show that one needs at least $N-2m+1$ LMSs to make the projection also be zero.

As in the GHZ state case, suppose the optimal LMSs are
\begin{equation}\label{Eq:LMSAp}
\begin{aligned}
\mathcal{A}=\{A_i^{\otimes N}=(b_i\sigma_X+c_i\sigma_Y+d_i\sigma_Z)^{\otimes N}|i=1,2,\cdots,n_A\}.
\end{aligned}
\end{equation}
The final operator constructed from $\mathcal{A}$ is denoted $O$,
\begin{equation}\label{}
\begin{aligned}
O=\sum_{i=1}^{n_A}\sum_{j=1}^N\alpha_{ij}\sum_{\pi}\mathbb{I}^{\otimes j}A_i^{\otimes N-j},
\end{aligned}
\end{equation}
and it is assumed that $O=D_{N,m}$.

%The idea behind our proof is to find a subspace that is orthogonal with $\ket{D_{n,m}}\bra{D_{n,m}}$, and argue the minimum number of LMSs required to achieve zero-component on that subspace.
As shown in Eqs.~\eqref{Eq:DecomDicke}, \eqref{Eq:Theta} and \eqref{Eq:DickeXY},  $D_{N,m}$ can be decomposed as,
\begin{equation}\label{Eq:App:D}
\begin{aligned}
D_{N,m}=\frac{1}{\binom{N}{m}}\sum_{t=0}^{m}\Theta_t,\\
\end{aligned}
\end{equation}
with
\begin{equation}\label{}
\begin{aligned}
&\Theta_t=\sum_{\pi}(\frac{\mathbb{I}-\sigma_Z}{2})^{\otimes m-t}\otimes \chi_{t}\otimes  (\frac{\mathbb{I}+\sigma_Z}{2})^{\otimes N-m-t},\\
&\chi_t=\sum_{l=0}^{2t}\sum_\pi \alpha_{t,l}\sigma_X^{\otimes l}\sigma_Y^{\otimes (2t-l)}.
\end{aligned}
\end{equation}
It is not hard to see that the total number of $\sigma_X$ and $\sigma_Y$ operators on $N$ qubits is at most $2m$, appearing in every terms of the decomposition in Eq.~\eqref{Eq:App:D}.  Thus $D_{N,m}$ lies in the subspace, $\mathrm{span}\{M_{i,j,k}|0\le j+k\le 2m\}$, where $M_{i,j,k}$ are defined in (\ref{Eq:Mijk}).

For the constructed operator $O$, there should be some operator $A_i$ satisfying $b_i\neq0$ in the LMSs in Eq.~\eqref{Eq:LMSAp}, otherwise there will be no $\sigma_X$ term in $O$. If all the operator $A_i$ with $b_i\neq0$ satisfy $d_i=0$, $O\not=D_{N,m}$, since there are terms like $\sum_\pi \sigma_X^{\otimes 2t} \sigma_Z^{\otimes N-2t} $ in $D_{N,m}$.

Now consider the subspace
\begin{equation}\label{Eq:Ap:Mj}
\begin{aligned}
V_1=\mathrm{span}\{M_{0,j,0}|2m+1\le j\le N\},
\end{aligned}
\end{equation}
where $D_{N,m}$ has zero component on. For $O$, only the following terms could have non-zero projection on this subspace,
\begin{equation}\label{}
\begin{aligned}
\sum_{i=1}^{n_A}\alpha_{i0}A_i^{\otimes N},
\end{aligned}
\end{equation}
and we write the projection on $M_{0,j,0}$ in Eq.~\eqref{Eq:Ap:Mj} in vector form as,
\begin{equation}\label{Eq:ODicke1}
\begin{aligned}
v_O=\sum_{i=1}^{n_A}\alpha_{i0}(b_i^{2m+1}d_i^{N-2m-1},b_i^{2m+2}d_i^{N-2m-2},\cdots,b_i^N),
\end{aligned}
\end{equation}
where we consider $b_i\neq0$ otherwise it contribute nothing to the summation.  On account of $O=D_{N,m}$, this projection result should also be zero, i.e., $v_O=\vec{0}$.

First, we focus on the case where all $d_i\neq0$, denote $\beta_i=b_i/d_i$. And $v_O$ can be written as,
\begin{equation}\label{Eq:ODicke}
\begin{aligned}
v_O=\sum_i \alpha_{i}(1,\beta_i,\cdots,\beta_i^{N-2m-1})^T=\vec{0}.
\end{aligned}
\end{equation}
where $\alpha_{i}$ is the summation of all the corresponding coefficients sharing the same $\beta_i$,
\begin{equation}\label{}
\begin{aligned}
\alpha_{i}=\sum_{i':\beta_{i'}=\beta_i}\alpha_{i'0}b_{i'}^{2m+1}d_{i'}^{N-2m-1},
\end{aligned}
\end{equation}

In fact, there is at least one $\alpha_{i}\neq0$. To illustrate this, let us consider another subspace,
\begin{equation}\label{}
\begin{aligned}
V_2=\mathrm{span}\{M_{0,j,0}|1\le j\le 2m\}.
\end{aligned}
\end{equation}
It is clear that the projection of $D_{N,m}$ on $V_2$ is nonzero. For example, the terms like $\sum_\pi  \sigma_X^{\otimes 2t}\sigma_Z^{\otimes N-2t}$ are the basis of it. In the meantime, the projection of $O$ on $V_2$ is,
\begin{equation}\label{}
\begin{aligned}
v_O'=\sum_i \alpha_{i}(\beta_i^{-2m},\beta_i^{-2m+1}\cdots,\beta_i^{-1})^T.
\end{aligned}
\end{equation}
Consequently, there is at least one $\alpha_{i}\neq0$, otherwise $v_O'=\vec{0}$, which is in contradiction to $O=D_{N,m}$.

Denote the number of different $\beta_i$ as $n_{\beta}$. Then Eq.~\eqref{Eq:ODicke} means that an $(N-2m)\times n_{\beta}$ Vandermonde matrix multiplies a non-zero vector $\{\alpha_i\}$. Due to the non-singularity of the Vandermonde matrix, the result can be $\vec{0}$, only if $n_{\beta}>N-2m$. As a result, the number of measurement settings is lower bounded by $n_A\geq n_{\beta}>N-2m$.

For the case where there are LMSs with $d_i=0$, denote the set of these LMSs as $S$, the projection in Eq.~\eqref{Eq:ODicke1} shows
\begin{equation}\label{}
\begin{aligned}
&\sum_{i\in S}\alpha_{i0}b_j^N(0,0,\cdots, 1)^T+\sum_{i\in [n_A]\backslash S}\alpha_{i0}b_j^{2m+1}d_i^{N-2m-1}(1,\beta_i,\cdots,\beta_i^{N-2m-1})^T=\vec{0}.
\end{aligned}
\end{equation}
Denote $\sum_{i\in S}\alpha_{i0}b_j^N=\alpha'$. If $\alpha'\neq0$, it just adds one vector in the linear combination compared with Eq.~\eqref{Eq:ODicke},
\begin{equation}\label{}
\begin{aligned}
\alpha'(0,0,\cdots, 1)^T+\sum_i \alpha_{i}(1,\beta_i,\cdots,\beta_i^{N-2m-1})^T=\vec{0}.
\end{aligned}
\end{equation}
Similarly, based on the non-singularity of the Vandermonde matrix, $n_{\beta}+1>N-2m$. Hence, $n_A\geq n_{\beta}+1>N-2m$.

%%%%%%%%%%%%%%%%%%%%%%%%%%%%%%%%%%%%%%%%
% choose a style
%\bibliographystyle{ieeetr}
%\bibliographystyle{unsrt}
\bibliographystyle{apsrev4-1}
%%%%%%%%%%%%%%%%%%%%%%%%%%%%%%%%%%%%%%%%

%%%%%%%%%%%%%%%%%%%%%%%%%%%%%%%%%%%%%%%%
% choose a .bib file
\bibliography{BibSym}
%%%%%%%%%%%%%%%%%%%%%%%%%%%%%%%%%%%%%%%%

\end{document}